\documentclass[lettersize,journal]{IEEEtran}

\usepackage{cite}
\usepackage{amsmath,amssymb,amsfonts}
\usepackage{algorithmic}
\usepackage{graphicx}
\usepackage{textcomp}
\usepackage{subfig}
\usepackage{dsfont}
\usepackage{amsthm}
\usepackage{color, soul}
\usepackage{hyperref}

\def\BibTeX{{\rm B\kern-.05em{\sc i\kern-.025em b}\kern-.08em
    T\kern-.1667em\lower.7ex\hbox{E}\kern-.125emX}}
\markboth{\journalname, VOL. XX, NO. XX, XXXX 2023}
{Charles Millard and Mark Chiew: Clean self-supervised MRI reconstruction with Robust SSDU}

\renewcommand{\arraystretch}{2.2}
\usepackage{array}
\newcolumntype{M}[1]{>{\centering\arraybackslash}m{#1}}
\usepackage{hhline}

\newcommand{\pt}{ \widetilde{p}}
\newcommand{\Yt}{ \widetilde{Y}}
\newcommand{\yt}{ \widetilde{y}}
\newcommand{\Yhn}{\hat{Y}_{Nr2F}}
\newcommand{\Yhr}{\hat{Y}_{RSSDU}}
\newcommand{\Mt}{M_{\Lambda}}
\newcommand{\Mo}{M_{\Omega}}

\newcommand{\mt}{ \widetilde{m}}
\newcommand{\Pds}{ \mathds{P}}
\newcommand{\Eds}{ \mathds{E}}
\newcommand{\eye}{ \mathds{1}}
\newcommand{\Nt}{ \widetilde{N}}
\newcommand{\nt}{ \widetilde{n}}

\newcommand{\Mlo}{M_{\Lambda \cap \Omega} }
\newcommand{\Moml}{M_{\Omega \setminus \Lambda } }

\newcommand{\Mloc}{M_{ (\Lambda \cap \Omega  )^c }} 
 
\newcommand{\Moc}{M_{ \Omega ^c }}

\newcommand{\Pc}{\mathcal{P}}

\newcommand{\dt}{\nabla_\theta}

\newcommand{\A}{\frac{1+\alpha^2}{\alpha^2}}

\newcommand{\ft}{f_\theta ( \Yt)}

\newtheorem{claim}{Claim}
\newtheorem{lemma}{Lemma}
\newtheorem{result}{Result}


\begin{document}
\title{Clean self-supervised MRI reconstruction from noisy, sub-sampled training data with Robust SSDU}
\author{Charles Millard and Mark Chiew 
\thanks{This work was supported in part by the Engineering and Physical Sciences Research Council, grant EP/T013133/1, by the Royal Academy of Engineering, grant RF201617/16/23, and by the Wellcome Trust, grant 203139/Z/16/Z. The computational aspects of this research were supported by the Wellcome Trust Core Award Grant Number 203141/Z/16/Z and the NIHR Oxford BRC. The views expressed are those of the authors and not necessarily those of the NHS, the NIHR or the Department of Health. This research was undertaken, in part, thanks to funding from the Canada Research Chairs Program.}
\thanks{Charles Millard and Mark Chiew are with the Wellcome Centre for Integrative Neuroimaging, FMRIB, Nuffield Department of Clinical Neurosciences, University of Oxford, Level 0, John Radcliffe Hospital, Oxford, OX3 9DU, UK. Mark Chiew is also with the Department of Medical Biophysics, University of Toronto, Toronto, Canada and Physical Sciences, Sunnybrook Research Institute, Toronto, Canada (email: charles.millard@ndcn.ox.ac.uk and mark.chiew@utoronto.ca).}}

\markboth{IEEE TRANSACTIONS ON COMPUTATIONAL IMAGING, VOL. X, NO. X, XXX XX}{Self-supervised deep learning MRI reconstruction with Noisier2Noise}

\maketitle

\begin{abstract}
Most existing methods for Magnetic Resonance Imaging (MRI) reconstruction with deep learning use fully supervised training, which assumes that a high signal-to-noise ratio (SNR), fully sampled dataset is available for training. In many circumstances, however, such a dataset is highly impractical or even technically infeasible to acquire. Recently, a number of self-supervised methods for MR reconstruction have been proposed, which use sub-sampled data only. However, the majority of such methods, such as Self-Supervised Learning via Data Undersampling (SSDU), are susceptible to reconstruction errors arising from noise in the measured data. In response, we propose Robust SSDU, which provably recovers clean images from noisy, sub-sampled training data by simultaneously estimating missing k-space samples and denoising the available samples. Robust SSDU trains the reconstruction network to map from a further noisy and sub-sampled  version of the data to the original, singly noisy and sub-sampled data, and applies an additive Noisier2Noise correction term at inference. We also present a related method, Noiser2Full, that recovers clean images when noisy, fully sampled data is available for training. Both proposed methods are applicable to any network architecture, straight-forward to implement and have similar computational cost to standard training.  We evaluate our methods on the multi-coil fastMRI brain dataset with a novel denoising-specific architecture and find that it performs competitively with a benchmark trained on clean, fully sampled data.
\end{abstract}

\begin{IEEEkeywords}
Deep Learning, Image Reconstruction, Magnetic Resonance Imaging
\end{IEEEkeywords}



\section{Introduction}

Magnetic Resonance Imaging (MRI) has excellent soft tissue contrast and is the gold standard  modality for a number of clinical applications. A hindrance of MRI, however, is its lengthy acquisition time, which is especially challenging when high spatio-temporal resolution is required, such as for dynamic imaging \cite{bustin2020compressed}. To address this, there has been substantial research attention on methods that reduce the acquisition time without significantly sacrificing the diagnostic  quality \cite{Pruessmann1999, Lustig2007, Ye2019}. In MRI, measurements are acquired in the Fourier representation of the image, referred to in the MRI literature as ``k-space". Since the acquisition time is roughly proportional to the number of k-space samples, acquisitions can be accelerated by sub-sampling.  A reconstruction algorithm is then employed to estimate the image from the sub-sampled data.


In recent years, reconstructing sub-sampled MRI data with neural networks has emerged as the state-of-the-art \cite{wang2016accelerating, kwon2017parallel, hammernik2018learning}. The majority of existing methods assume that a fully sampled dataset is available for fully-supervised training. However, for many applications,  no such dataset is available, and may be difficult or even infeasible to acquire in practice \cite{uecker2010real, haji2018validation, lim20193d}. 
In response, there have been a number of self-supervised methods proposed, which train on sub-sampled data only \cite{tamir2019unsupervised, huang2019deep, yaman2020self, aggarwal2021ensure}.   

Most existing training methods assume that the measurement noise is small and do not explicitly denoise sampled data. Section \ref{sec:theory_prop} shows theoretically that without explicit denoising the reconstruction quality degrades when the measurement noise increases. This is a particular concern for low SNR measurements, where SNR is the ratio of the signal and noise amplitudes and the SNR is considered ``low" when the measurement noise contributes substantially to the difference between the noisy, sub-sampled data and the ground truth. For instance, the data acquired from low-cost, low-field scanners  is considered low SNR \cite{obungoloch2018design, koonjoo2021boosting, schlemper2020deep}.   

The goal of this paper is to develop a theoretically rigorous, computationally efficient approach for simultaneous self-supervised reconstruction and denoising that performs comparably to fully supervised training. The primary challenge of this goal is that many existing self-supervised denoising methods are not applicable to data that is also sub-sampled \cite{xie2020noise2same}, depend on paired instances of noisy data \cite{lehtinen2018noise2noise}, or are substantially computationally more expensive than fully supervised learning at training time \cite{batson2019noise2self}.  

This paper proposes a modification of Self-Supervised Learning via Data Undersampling (SSDU) \cite{yaman2020self}  that also removes measurement noise, building on the present authors' recent work \cite{millard2022theoretical} on the connection between SSDU and the multiplicative version of the self-supervised denoising method Noisier2Noise \cite{moran2020noisier2noise}. Our method, which we term ``Robust SSDU", combines SSDU with \textit{additive} Noisier2Noise. In brief, Robust SSDU trains a network to map from a further sub-sampled and further noisy version of the training data to the original sub-sampled, noisy data. Then, at inference, a correction is applied to the network output that ensures that the clean (i.e. noise-free) image is recovered in expectation. 

We find that Robust SSDU performs competitively with a fully-supervised benchmark where the network is trained on clean, fully sampled data, despite training on noisy, sub-sampled data only.  We also propose a related method that recovers clean images for the simpler task of when fully sampled, noisy data is available for training, which we term ``Noisier2Full". Both Noisier2Full and Robust SSDU are fully mathematically justified and have minimal additional computational expense compared to standard training. 


The existing method  most similar to Robust SSDU  is Noise2Recon-Self-Supervised (Noise2Recon-SS) \cite{desai2023noise2recon}. The proposed method  Robust SSDU has a number of key difference to Noise2Recon-SS, including a loss weighting and an additive Noisier2Noise correction term at inference that statistically guarantees recovery of the ground truth: see Section \ref{sec:comp} for a detailed comparison. To our knowledge, Robust SSDU is the first method that provably recovers clean images when only noisy, randomly sub-sampled data is available for training. In practice, we find that Robust SSDU offers substantial image quality improvements over Noise2Recon-SS and a two-fold reduction in computational cost at training: see Section \ref{sec:results}. 



\subsection{Notation}

This paper uses notation consistent with \cite{millard2022theoretical}. We use the subscripts $t$ and $s$ to index the training set $\mathcal{T}$ and test set $\mathcal{S}$ respectively. For instance, data in the training and test set are denoted $y_t$ and $y_s$ respectively. Random variables are represented as their instances without indices, and are capitalized if they are vectors. For instance, $y_t, y_s \backsim Y$ for vectors and $ M_{\Omega_t}, M_{\Omega_s} \backsim \Mo$ for matrices. 

We use $Y_0$ to refer to the ground truth, $Y$ to refer to the data, $\Yt$ to refer to the further corrupted data and $\hat{Y}$ to refer to an estimate of the ground truth. We note that sections \ref{sec:add_noise}, \ref{sec:mult_noise} and \ref{sec:theory_prop} onward discuss different recovery tasks, so the definitions of, for instance, the data $Y$ and its instances are section-specific. 

\section{Theory: background}

Image recovery with deep learning is a regression problem, so is centered around the conditional distribution $Y_0|Y$, where $Y_0$ and $Y$ are the random variables associated with the ground truth and data respectively \cite{berk2008statistical}. If ground truth data $y_{0,t} \backsim Y_0 $ are available for training, fully supervised learning can be employed to characterize $Y_0|Y$ directly  \cite{tian2020deep}. This paper focuses on self-supervised learning, which concerns the task of training a network to estimate the ground truth when the training data is $y_t \backsim Y$, so is itself corrupted \cite{lehtinen2018noise2noise, krull2019noise2void, batson2019noise2self, huang2021neighbor2neighbor}. 

The remainder of this section reviews key works from the self-supervised learning literature that form the bases of the methods proposed in this paper. Section \ref{sec:add_noise} presents the case where the data corruption is Gaussian noise, and Section \ref{sec:mult_noise} presents the case where the data  corruption is sub-sampling. 


\subsection{Self-supervised denoising with Noisier2Noise\label{sec:add_noise}}

Denoising with deep learning aims to recover a clean $q$-dimensional vector from noisy data
\begin{align}
y_s = y_{0,s} + n_s,
\end{align}
where $n_s$ is noise and $s \in \mathcal{S}$ indexes the test set. In MRI, noise in k-space is modeled as complex Gaussian with zero mean, $n_s \backsim \mathcal{CN} (0, \Sigma^2_n )$, where $\Sigma^2_n$ is a covariance matrix that can be estimated, for instance, with an empty pre-scan \cite{hansen2015image}. We treat the noise as white, $\Sigma^2_n = \sigma_n^2\eye$, noting that noise with non-trivial covariance can be whitened  by left-multiplying $y_s$ with $\Sigma^{-1}_n$. Other noise distributions are discussed in Section \ref{sec:disc}.


This paper focuses on additive Noisier2Noise \cite{moran2020noisier2noise} because we find that it offers a natural way to extend image reconstruction to low SNR data: see Section \ref{sec:theory_prop}. Noisier2Noise's training procedure consists of corrupting the noisy training data with further noise, and training a network to recover the singly noisy image from the noisier image. 
Concretely, for each $y_t$, further noise is introduced,
\begin{align}
 \yt_t =y_t + \nt_t = y_{0,t} + n_t + \nt_t, \label{eqn:y0_plus_nnt}
\end{align}
where $\nt_t \backsim \mathcal{CN} (0, \alpha^2 \sigma^2_n \eye )$ for a constant $\alpha$. Then, a network $f_\theta$ with parameters $\theta$ is trained to minimize
\begin{align}
	\hat{\theta} = \underset{\theta}{\arg \min} \sum_{t \in \mathcal{T}} \| f_\theta (\yt_t) - y_t \|^2_2. \label{eqn:add_theta_hat}   
\end{align} 
The following result states that a simple transform of the trained network yields the ground truth in expectation despite never seeing the ground truth during training. Here, and throughout this paper, expectations are taken over all random variables. 
\begin{result} \label{res:add_n2n}
Consider the random variables $Y = Y_0 + N$ and $\Yt = Y + \Nt$,  where $N$ and $\Nt$ are zero-mean Gaussian distributed with variances $\sigma_n^2$ and $\alpha^2 \sigma_n^2$ respectively.  Minimizing 
\begin{equation}
	\theta^* = \underset{\theta}{\arg \min} \hspace{0.1cm}
	\Eds[\| f_\theta (\Yt) - Y\|^2_2 | \Yt ] \label{eqn:theta_n2n}
\end{equation} 
yields a network that satisfies
\begin{equation}
	\Eds [Y_0| \Yt] =  \frac{ (1 + \alpha^2)f_{\theta^*}(\Yt) - \Yt }{\alpha^2} . \label{eqn:add_corr}
\end{equation} 
\end{result}
\begin{proof}
	See Section 3.3 of \cite{moran2020noisier2noise}.
\end{proof}
Here, \eqref{eqn:theta_n2n} can be thought of as \eqref{eqn:add_theta_hat} in the limit of an infinite number of samples, and $\hat{\theta}$ as a finite sample approximation of $\theta^*$. Result \ref{res:add_n2n} states that the clean image can be estimated in conditional expectation by employing a correction term based on $\alpha$. It suggests the following procedure for estimating $y_{0,s}$ at inference: corrupt the test data $y_s$ with further noise, $\yt_s = y_s + \nt_s$, apply the trained network to the further noisy data, $f_{\hat{\theta}}(\yt_s)$, and correct the output using the right-hand-side of \eqref{eqn:add_corr}. 





\subsection{Self-supervised reconstruction with SSDU \label{sec:mult_noise}}

This section focuses on the case where the data consists of noise-free, sub-sampled data 
\begin{align}
y_s = M_{\Omega_s} y_{0,s}. 
\end{align}
 Here, $M_{\Omega_s}$ is a sampling mask, a diagonal matrix with $j$th diagonal 1 when $j \in \Omega_s$ and 0 otherwise for sampling set $\Omega_s \subseteq \{1, 2, \ldots, q\}$. 

Self-supervised reconstruction consists of training a network to recover images when only sub-sampled data is available for training: $y_t = M_{\Omega_t} y_{0,t}$  \cite{zeng2021review}. This work focuses on the popular method SSDU \cite{yaman2020self}, which was theoretically justified in \cite{millard2022theoretical} via the multiplicative noise version of Noiser2Noise \cite{moran2020noisier2noise}. In this framework, analogous to the further noise used in \eqref{eqn:y0_plus_nnt}, the training data $y_t$ is \textit{further sub-sampled} by applying a second mask with sampling set $\Lambda_t \subseteq \{1, 2, \ldots, q\}$ to $y_t$, 
\begin{align}
 \yt_t =M_{\Lambda_t} y_t = M_{\Lambda_t \cap \Omega_t} y_{0,t}, \label{eqn:mult_lamb}
\end{align}
where $M_{\Lambda_t \cap \Omega_t} = M_{\Lambda_t} M_{\Omega_t}$. Training  consists of minimizing a loss function on indices in $\Omega_t \setminus \Lambda_t$, such as
\begin{equation}
	\hat{\theta} = \underset{\theta}{\arg \min} \sum_{t \in \mathcal{T}}  \| M_{\Omega_t \setminus \Lambda_t }(  f_\theta (\yt_t) - y_t) \|^2_2, \label{eqn:theta_star_mult}
\end{equation} 
where $M_{\Omega_t \setminus \Lambda_t } =   (\eye -M_{\Lambda_t} ) M_{\Omega_t}$. Although for theoretical ease we state SSDU with an $\ell_2$ loss here, it is known that other losses  are possible \cite{yaman2020self}. 

Let $p_j = \Pds[ j \in \Omega]$ and $\pt_j = \Pds[ j \in \Lambda]$. Assuming that 
\begin{align}
 p_j & > 0 \hspace{0.2cm} \forall \hspace{0.1cm} j,
\label{eqn:mask_req1} 
 \\ \pt_j < 1 & \hspace{0.2cm} \forall \hspace{0.1cm}  \{j : p_j < 1\}, \label{eqn:mask_req2}
\end{align} 
the following result from \cite{millard2022theoretical} proves that SSDU recovers the clean image in expectation.

\begin{result} \label{res:ssdu}
Consider the random variables $Y = \Mo Y_0$ and $\Yt = \Mt Y$. When \eqref{eqn:mask_req1} and \eqref{eqn:mask_req2} hold, minimizing
\begin{equation}
	\theta^* = \underset{\theta}{\arg \min}  \hspace{0.1cm} \Eds[  \| M_{\Omega \setminus \Lambda }(  f_\theta (\Yt) - Y) \|^2_2 | \Yt] \label{eqn:theta_star_mult}
\end{equation} 
yields a network with parameters that satisfies
\begin{equation}
 \Mloc  \Eds[Y_0 | \Yt] = \Mloc f_{\theta^*} (\Yt).  \label{eqn:ssdu_res}
\end{equation}
\end{result} 
\begin{proof}
See Appendix B of \cite{millard2022theoretical}\footnote{Where \cite{millard2022theoretical} uses $\eye - \Mt \Mo$, this uses paper the more compact notation $\Mloc$, where superscript $c$ denotes the complement of a set.}.
\end{proof}
Result \ref{res:ssdu} states that the network correctly estimates $Y_0$ in conditional expectation for indices not in $\Lambda \cap \Omega$. To estimate everywhere in  k-space one can overwrite sampled indices or use a data consistent architecture: see \cite{millard2022theoretical} for details.





\section{Theory: proposed methods \label{sec:theory_prop}}

The reminder of this paper considers the task of training a network to recover images from data that is both noisy \textit{and} sub-sampled:
\begin{align}
 y_s = M_{\Omega_s} (y_{0,s} + n_s). 
\end{align}
It has been stated that when a network reconstructs noisy MRI data with a standard training method, there is a denoising effect \cite{koonjoo2021boosting}. In the following, we motivate the need for methods that explicitly remove noise by showing that the apparent noise removal is in fact a ``pseudo-denoising" effect due to the correct estimation of the ground truth in expectation only for indices in $\Omega^c$. 

Consider the standard approach of training a network to map from noisy, sub-sampled $y_t$ to noisy, fully sampled $y_{0,t} + n_t$. In terms of random variables, training consists of minimizing 
\begin{equation}
	\theta^* = \underset{\theta}{\arg \min} \hspace{0.1cm} 
	\Eds [ \| f_\theta (Y) - (Y_0 + N)\|^2_2 | Y],
\end{equation} 
which gives a network that satisfies
\begin{equation}
f_{\theta^*} (Y) = \Eds [Y_0 + N | Y]. \label{eqn:motiv_f}
\end{equation}
Eqn. \eqref{eqn:motiv_f} does not hold for completely arbitrary network architecture. The conditions on $f_\theta$ (which are also required for Results \ref{res:add_n2n} and \ref{res:ssdu}) are detailed in Section II-A of \cite{millard2022theoretical}. In brief, the Jacobian matrix $J$ with entries $J_{ij} = \partial f_{\theta} (Y)_j / \partial \theta_i$ must have maximally linearly independent rows, which is expected for well-constructed architectures when the number of parameters exceeds $q$. Throughout the remainder of this paper, we assume that $f_\theta$ satisfies this condition. We also assume that the optimizer is not stuck in a poor local minimum so that the network is a good approximation of \eqref{eqn:motiv_f} in practice.

It is instructive to examine how $\Eds [Y_0 + N | Y]$ depends on the sampling mask $\Omega$. Firstly, for $j \notin \Omega$, 
\begin{align}
\Eds [Y_{0,j}  + N_j | Y, j \notin \Omega] &= \Eds [Y_{0,j} | Y]  + \Eds [N_j]   \nonumber \\ &= \Eds [Y_{0,j} | Y], \label{eqn:motiv_notin}
\end{align}
where we have used the independence of $N_j$ from $Y$ when $j \notin \Omega$ and $\Eds [N_j] = 0$ by assumption. For the alternative   $j \in \Omega$,  
\begin{align} 
\Eds [Y_{0,j} + N_j | Y, j \in \Omega ] = \Eds [Y_{j}| Y] = Y_j
\end{align}
where $Y_{0,j} + N_j = Y_j$ for $j \in \Omega$ has been used.
The trained network therefore satisfies
\begin{align}
f_{\theta^*}(Y) = \Eds [Y_0 + N | Y] = M_{\Omega^c} \Eds [Y_0 | Y]  +  \Mo Y. \label{eqn:18}
\end{align}
Therefore the network targets the noise-free $Y_{0}$ in regions in $\Omega^c$ but recovers the noisy $Y$ otherwise. As there is less total measurement noise present than $Y_0 + N$, this gives the impression of noise removal, however, we emphasize that the network does not remove the noise in $Y$. Since the term ``denoising" typically refers to the removal of noise \textit{from the input data}, we use the term ``pseudo-denoising" to refer to the behavior stated in \eqref{eqn:18}. Other than the conditions on $f_\theta$ described above, this result is agnostic to the network architecture, so includes ``unrolled" approaches that may have a regularization parameter which is designed to trade off the model and consistency with the data.   

 We refer to this method described in this section as ``Supervised w/o denoising" throughout this paper. 
In the following we propose methods that explicitly recover $Y_0$ in conditional expectation from noisy, sub-sampled inputs in two cases: A) the training data is noisy and fully sampled; B) the training data is noisy and sub-sampled. For tasks A and B we propose ``Noisier2Full" and ``Robust SSDU" respectively.

\subsection{Noisier2Full for fully sampled, noisy training data \label{sec:denoi_prop}}

This section proposes Noisier2Full, which  extends additive Noisier2Noise to reconstruction tasks for noisy, fully sampled training data. Based on \eqref{eqn:y0_plus_nnt}, we propose corrupting the measurements $y_t$ with further noise on the sampled indices,
\begin{align}
  \yt_t = y_{t} + M_{\Omega_t} \nt_t.
\end{align}
Then we minimize the loss between $\yt_t$ and the noisy, fully sampled training data $y_{0,t} + n_t$. In terms of random variables,
\begin{equation}
	\theta^* = \underset{\theta}{\arg \min} \hspace{0.1cm} 
	\Eds [ \| f_\theta (\Yt) - (Y_0 + N)\|^2_2 | \Yt]. 
	\label{eqn:theta_n2n_ssdu}
\end{equation} 
Minimizing the $\ell_2$ norm gives a network that satisfies
\begin{align}
	f_{\theta^*} (\Yt) = \Eds [Y_0 + N | \Yt], 
\end{align}
which is recognizable as \eqref{eqn:motiv_f} with $Y$ replaced by $\Yt$. Similarly to \eqref{eqn:motiv_notin}, $N_j$ is independent of $\Yt$ when $j \notin \Omega $, so the ground truth is estimated in such regions:
\begin{equation}
\Eds [Y_{0, j} | \Yt, j \notin \Omega] = \Eds [Y_{0, j} | \Yt].
\end{equation}
However, crucially, the expectation is conditional on $\Yt$, not $Y$, so the additive Noisier2Noise correction stated in Result 1 is applicable when $j \in \Omega$:
\begin{align}
	\Eds [Y_{0,j}| \Yt, j \in \Omega] = \frac{ (1 + \alpha^2)f_{\theta^*}(\Yt)_j - \Yt_j }{\alpha^2}
\end{align}
Although Result \ref{res:add_n2n} is not specifically constructed for sub-sampled data, it is applicable here because it is an entry-wise statistical relationship, so can be applied to each index that has the proper noise statistics. Therefore $Y_0$ can be estimated with
\begin{equation}
	\Eds [Y_0| \Yt] = \Mo \left ( \frac{ (1 + \alpha^2)f_{\theta^*}(\Yt) - \Yt }{\alpha^2} \right ) + M_{\Omega^c} f_{\theta^*}(\Yt). \label{eqn:yhat}
\end{equation}
In summary, Noisier2Full recovers $Y_0$ in conditional expectation by introducing further noise to the sampled indices during training, and correcting those indices at inference via additive Noisier2Noise. In the subsequent section, we show how this approach can be extended to the more challenging case where the training data is also sub-sampled. 


\subsection{Robust SSDU for sub-sampled, noisy training data \label{sec:denoi_and_dealias}}

\begin{figure}[t]
\centering
	\includegraphics[width=0.5\textwidth]{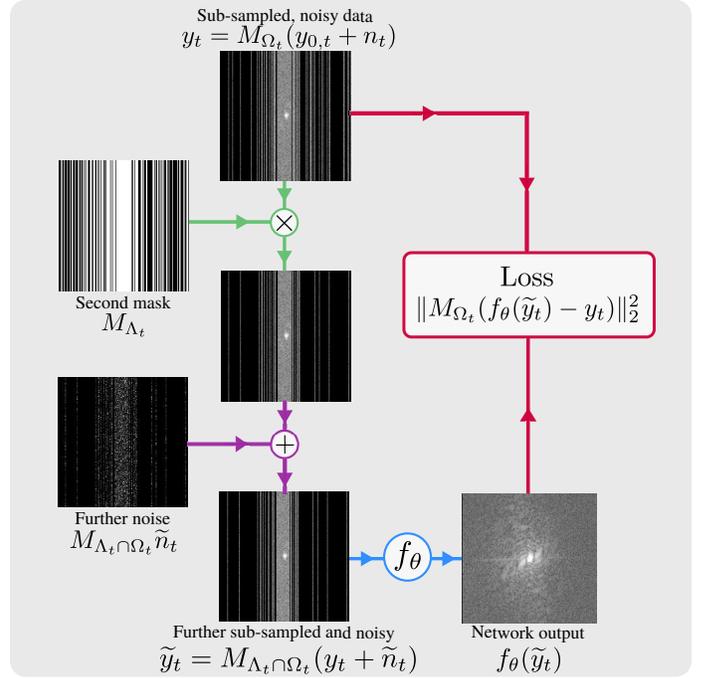}
	\caption{The proposed self-supervised reconstruction and denoising method Robust SSDU, which extends the training procedure illustrated in Fig. 1 of \cite{millard2022theoretical} to low SNR data. The sub-sampled, noisy training data $y_t$ is further sub-sampled by a mask $M_{\Lambda_t}$ and corrupted by further noise $\nt_t$, yielding $\yt_t$. The loss is computed between $y_t$ and $f_\theta(\yt_t)$ on $\Omega_t$. \label{fig:schematic}}
\end{figure}

This section proposes Robust SSDU, which recovers clean images in conditional expectation when the training data is both noisy and sub-sampled. Robust SSDU combines the approaches from Sections \ref{sec:add_noise} and \ref{sec:mult_noise} to simultaneously reconstruct and denoise the data: see Fig. \ref{fig:schematic} for a schematic. We propose combining \eqref{eqn:y0_plus_nnt} and \eqref{eqn:mult_lamb} to form a vector that is further sub-sampled \textit{and} additionally noisy,   
\begin{align}
  \yt_t = M_{\Lambda_t\cap \Omega_t} (y_{t} + \nt_t). 
\end{align}
Recall that SSDU employs $M_{\Omega \setminus \Lambda }$ in the loss, which yields a network that estimates indices in $(\Lambda \cap \Omega)^c$: see Result \ref{res:ssdu}. For Robust SSDU we replace $M_{\Omega\setminus \Lambda }$ with $M_{\Omega}$, so that the loss is 
\begin{equation}
	\hat{\theta} = \underset{\theta}{\arg \min} 
	\sum_{t \in \mathcal{T}} \| M_{\Omega_t}  (f_\theta (\yt_t) - y_t)\|^2_2. 
\end{equation} 
In the following we show that this change leads to estimation everywhere in k-space, not just indices in $(\Lambda \cap \Omega)^c$. 

%

\begin{claim} \label{clm:robust_ssdu}
Consider the random variables $Y = \Mo (Y_0 + N)$ and $ \Yt = M_{\Lambda \cap \Omega}  (Y + \Nt)$,  where $N$ and $\Nt$ are zero-mean Gaussian distributed with variances $\sigma_n^2$ and $\alpha^2 \sigma_n^2$ respectively. When \eqref{eqn:mask_req1} and \eqref{eqn:mask_req2} hold, minimizing 
\begin{align}
	\theta^*  = \underset{\theta}{\arg \min} \hspace{0.1cm} \Eds[ \| \Mo (f_{\theta} (\Yt) - Y) \|_2^2 | \Yt ] \label{eqn:LSSDU}
\end{align} 
	yields a network with parameters that  satisfies
\begin{align}
 f_{\theta^*}(\Yt) = \Eds [Y_0 + N | \Yt].  
\end{align}
\end{claim}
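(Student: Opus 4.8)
The plan is to reduce the optimization in \eqref{eqn:LSSDU} to a pointwise least-squares problem in k-space, solve it index by index, and then use a conditional-independence argument to discharge the spurious conditioning that appears along the way. Following the same reasoning as for Results~\ref{res:add_n2n} and~\ref{res:ssdu} --- the assumption that the Jacobian of $f_\theta$ has maximally linearly independent rows lets the network realize the pointwise-optimal output at essentially every input --- it suffices to show that for almost every realization $\Yt = \yt$ the minimizer over $q$-dimensional vectors $v$,
\begin{equation*}
f_{\theta^*}(\yt) = \underset{v}{\arg \min}\ \Eds\!\left[\,\|\Mo(v - Y)\|_2^2 \mid \Yt = \yt\,\right],
\end{equation*}
equals $\Eds[Y_0 + N \mid \Yt = \yt]$.

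Since $\Mo$ is diagonal with $(\Mo)_{jj} = 1$ when $j \in \Omega$ and $0$ otherwise, and $Y_j = Y_{0,j} + N_j$ on $\Omega$, the objective equals $\sum_j \Eds[(\Mo)_{jj}\,|v_j - (Y_{0,j} + N_j)|^2 \mid \Yt = \yt]$, which separates over $j$. Each term is $\Pds[j \in \Omega \mid \Yt = \yt]\,\Eds[\,|v_j - (Y_{0,j}+N_j)|^2 \mid \Yt = \yt,\ j \in \Omega\,]$, so --- provided $\Pds[j \in \Omega \mid \Yt = \yt] > 0$ --- it is minimized over $v_j$ at the conditional mean, giving
\begin{equation*}
f_{\theta^*}(\yt)_j = \Eds[\,Y_{0,j} + N_j \mid \Yt = \yt,\ j \in \Omega\,].
\end{equation*}
The positivity $\Pds[j \in \Omega \mid \Yt = \yt] > 0$ for every $j$ is precisely where \eqref{eqn:mask_req1}--\eqref{eqn:mask_req2} are needed, by the same argument that underlies \eqref{eqn:ssdu_res} in \cite{millard2022theoretical}.

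The crux is to show that the extra conditioning on $\{j \in \Omega\}$ is redundant, i.e. $\Eds[Y_{0,j} + N_j \mid \Yt = \yt,\ j \in \Omega] = \Eds[Y_{0,j} + N_j \mid \Yt = \yt]$. I would establish this by independence. Collect the ``content'' variables $Z = (Y_0, N, \Nt)$, which by construction are independent of the masks $(\Mo, \Mt)$. The key point is that, up to a null set, the event $\{\Yt = \yt\}$ factorizes as the intersection of a mask event $\{\Lambda \cap \Omega = \mathrm{supp}(\yt)\}$ and a content event $\{(Y_0 + N + \Nt)_k = \yt_k \ \forall\, k \in \mathrm{supp}(\yt)\}$; here continuity of the Gaussian noise guarantees $(Y_0 + N + \Nt)_k \neq 0$ almost surely on the sampled set, so that the support of $\Yt$ genuinely pins down $\Lambda \cap \Omega$ and imposes no hidden constraint off it. Since $\{\Yt = \yt\}$ is thus (a.s.) a product of a $\sigma(\Mo, \Mt)$-event and a $\sigma(Z)$-event, conditioning on it preserves the independence of $Z$ and $(\Mo, \Mt)$; in particular $Y_{0,j} + N_j$ (a function of $Z$) is conditionally independent of $\{j \in \Omega\}$ (a function of the masks), which gives the desired equality. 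Stacking over $j$ yields $f_{\theta^*}(\Yt) = \Eds[Y_0 + N \mid \Yt]$.

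I expect the main obstacle to be making the factorization of $\{\Yt = \yt\}$ in the previous paragraph rigorous, together with the almost-sure bookkeeping it requires: one must verify both that $\mathrm{supp}(\yt)$ determines $\Lambda \cap \Omega$ (so the mask part of the event is genuinely measurable with respect to the masks alone) and that the unsampled indices carry no residual information about $Z$. The remaining ingredients --- the reduction to a pointwise problem and the index-wise least-squares minimization --- are routine and mirror the arguments already used for Results~\ref{res:add_n2n} and~\ref{res:ssdu}.
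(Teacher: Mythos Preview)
Your proof is correct and takes a genuinely different route from the paper's. The paper argues from the first-order optimality condition $\Eds[\Mo(f_{\theta^*}(\Yt)-Y)\mid\Yt]=0$, splits each index on whether $\Yt_j$ vanishes, and in the $\Yt_j=0$ case imports a computation from \cite{millard2022theoretical} to extract an explicit scalar factor $1-k_j$ with $k_j=(1-p_j)/(1-\pt_j p_j)$; this factor is nonzero under \eqref{eqn:mask_req1}--\eqref{eqn:mask_req2} and can therefore be divided out. You instead identify the pointwise minimizer directly as $\Eds[Y_{0,j}+N_j\mid\Yt,\,j\in\Omega]$ and then discharge the extra $\{j\in\Omega\}$ conditioning by your mask/content factorization of $\{\Yt=\yt\}$, exploiting that the Gaussian noise makes $\mathrm{supp}(\Yt)$ determine $\Lambda\cap\Omega$ almost surely. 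The paper's route stays close to the existing SSDU machinery and produces the explicit constant $k_j$ (though only its positivity is actually used here); your route is more structural and self-contained, making transparent \emph{why} the extra mask conditioning is harmless without any index-level probability computation. The two approaches meet at the same hinge: your positivity requirement $\Pds[j\in\Omega\mid\Yt]>0$ on $\{\Yt_j=0\}$ is exactly the paper's $1-k_j>0$.
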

\begin{proof}
See appendix \ref{app:mo}.
\end{proof}
%

The differences between \eqref{eqn:LSSDU} and the standard SSDU loss \eqref{eqn:theta_star_mult} are the change from $M_{\Omega \setminus \Lambda }$ to $\Mo$ and the inclusion of noise in the data $Y$. Intuitively, since $\Mo = M_{\Omega \setminus \Lambda } +  \Mlo$, the mask change extends \eqref{eqn:theta_star_mult} to include entries in $\Mlo$. Therefore, at inference, the network learns to map to entries in $\Mloc$, as stated in Result \ref{res:ssdu}, \textit{and} $ \Mlo$, which comes from the additional indices in the loss. In other words, it learns to map to everywhere in k-space. The inclusion of noise in the target simply implies that the network will learn to map to the noisy  $Y_0 + N$, as in \eqref{eqn:motiv_f}. 

At inference, we can use a similar approach to Section \ref{sec:denoi_prop}, applying the additive Noisier2Noise correction on indices sampled in $\Yt$. Since the indices sampled in $\Yt$ are $\Lambda \cap \Omega$, the clean image $Y_0$ is estimable with
\begin{multline}
	\Eds [Y_0| \Yt]  = \Mlo \left ( \frac{ (1 + \alpha^2)f_{\theta^*}(\Yt) - \Yt }{\alpha^2} \right ) \\ + \Mloc f_{\theta^*}(\Yt). \label{eqn:yhat_n2n_ssdu}
\end{multline}
Roughly speaking, Robust SSDU can be thought of as a generalization of Noisier2Full to sub-sampled training data. Specifically, Robust SSDU is mathematically equivalent to Noisier2Full when $\Omega = \{1, 2, \ldots, q\}$ and there is the change of notation $\Lambda \rightarrow \Omega$. More broadly, Robust SSDU can be interpreted as the simultaneous application of additive and multiplicative  Noisier2Noise \cite{moran2020noisier2noise, millard2022theoretical}.  

\subsection{Loss weighting of Noisier2Full and Robust SSDU}

For Noisier2Full and Robust SSDU, the task at training and inference is not identical: at training the network maps from $\Yt$ to $Y_0 + N$ or $\Mo(Y_0 + N)$, while at inference it maps from $\Yt$ to $Y_0$ via the $\alpha$-based correction term. Taking a similar approach to \cite{wang2023k, wiedemann2023deep}, this section describes how this can be compensated for by modifying the loss function in such a way that its gradient equals the gradient of the target loss in conditional expectation.

\begin{claim} \label{clm:Nr2N} Consider the random variables $Y = \Mo (Y_0 + N)$ and $\Yt = Y + \Mo \Nt$, where $N$ and $\Nt$ are zero-mean Gaussian distributed with variances $\sigma_n^2$ and $\alpha^2 \sigma_n^2$ respectively. Define 
\begin{align}
\Yhn = \Mo \left ( \frac{ (1 + \alpha^2)f_{\theta}(\Yt) - \Yt }{\alpha^2} \right ) + M_{\Omega^c} f_{\theta}(\Yt) \label{eqn:n2f_yhat}
\end{align}
where $f_{\theta}$ is an arbitrary function. Then
\begin{multline}
	 \dt \Eds \left[ \left\|\hat{Y}_{Nr2F} - Y_0 \right \|^2_2  | \Yt \right] \\ = \dt \Eds \left[ \left\|W_{\Omega} (f_\theta ( \Yt) - Y_0 - N) \right \|^2_2  | \Yt \right]. \label{eqn:nr2f_clm}
\end{multline}
where 
\begin{equation}
W_{\Omega} = \A \Mo  + \Moc.
\end{equation}
\end{claim}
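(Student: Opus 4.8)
The plan is to isolate the $\theta$-dependent part of $\Yhn$, differentiate both squared norms appearing in \eqref{eqn:nr2f_clm}, push the gradient inside the conditional expectation, and then recognise the remaining identity as the entry-wise Noisier2Noise relation underlying Result~\ref{res:add_n2n}. First I would expand \eqref{eqn:n2f_yhat} and note that the coefficients $(1+\alpha^2)/\alpha^2$ on $\Omega$ and $1$ on $\Omega^c$ are exactly the diagonal entries of $W_\Omega = \A \Mo + \Moc$, so that
\begin{equation*}
\Yhn = W_\Omega f_\theta(\Yt) - \frac{1}{\alpha^2}\Mo \Yt .
\end{equation*}
The key structural point is that the whole $\theta$-dependence of $\Yhn$ --- and also of $W_\Omega(f_\theta(\Yt) - Y_0 - N)$ --- is carried by the single vector $a(\theta) := W_\Omega f_\theta(\Yt)$. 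Writing $b_1 := Y_0 + \alpha^{-2}\Mo\Yt$ and $b_2 := W_\Omega(Y_0 + N)$, neither of which depends on $\theta$, we have $\Yhn - Y_0 = a(\theta) - b_1$ and $W_\Omega(f_\theta(\Yt) - Y_0 - N) = a(\theta) - b_2$.

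Next, by the chain rule, for any $\theta$-independent (possibly random) vector $b$,
\begin{equation*}
\dt \|a(\theta) - b\|_2^2 = 2\,\mathrm{Re}\{ G^{H}( a(\theta) - b ) \},
\end{equation*}
where $G$ is the Jacobian of $a(\theta)$ in $\theta$, a matrix that is deterministic once $\Yt$ is fixed and equals $W_\Omega$ times the Jacobian of $f_\theta(\Yt)$. Interchanging $\dt$ with $\Eds[\,\cdot \mid \Yt]$ and pulling out the conditionally deterministic $G$ and $a(\theta)$ gives
\begin{equation*}
\dt \,\Eds[ \|a(\theta) - b\|_2^2 \mid \Yt ] = 2\,\mathrm{Re}\{ G^{H}( a(\theta) - \Eds[ b \mid \Yt ] ) \}.
\end{equation*}
Applying this with $b = b_1$ and then $b = b_2$, the gradient identity \eqref{eqn:nr2f_clm} holds provided $\Eds[ b_1 \mid \Yt ] = \Eds[ b_2 \mid \Yt ]$, i.e. provided $\Eds[Y_0 \mid \Yt] + \alpha^{-2}\Mo\Yt = W_\Omega \,\Eds[Y_0 + N \mid \Yt]$. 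This reduction is insensitive to the architecture of $f_\theta$, since its Jacobian factors out of both sides.

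It then remains to verify this vector identity index by index. For $j \notin \Omega$ both sides equal $\Eds[Y_{0,j} \mid \Yt]$: on the left $[\Mo\Yt]_j = 0$, and on the right $W_{\Omega,jj} = 1$ while $\Eds[N_j \mid \Yt] = 0$ because $N_j$ is independent of $\Yt$ on unsampled indices. For $j \in \Omega$ we have $W_{\Omega,jj} = \A$ and $[\Mo\Yt]_j = \Yt_j$, so, using $Y_j = Y_{0,j} + N_j$, the identity becomes
\begin{equation*}
\Eds[Y_{0,j} \mid \Yt] + \frac{1}{\alpha^2}\Yt_j = \frac{1+\alpha^2}{\alpha^2}\,\Eds[Y_j \mid \Yt],
\end{equation*}
which rearranges to $\Eds[Y_{0,j} \mid \Yt] = \alpha^{-2}\big((1+\alpha^2)\Eds[Y_j \mid \Yt] - \Yt_j\big)$ --- precisely the entry-wise statement behind Result~\ref{res:add_n2n}. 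It is valid on $\Omega$ because there $\Yt_j = Y_j + \Nt_j$ with $\Nt_j$ zero-mean Gaussian of variance $\alpha^2\sigma_n^2$, exactly the setting of that result, just as invoked for Noisier2Full in Section~\ref{sec:denoi_prop}.

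I expect the work to be essentially bookkeeping rather than conceptually hard. The point needing most care is the reduction in the second step: confirming that the two loss integrands genuinely share the single $\theta$-dependent vector $W_\Omega f_\theta(\Yt)$ (so that their gradients differ only through the conditionally deterministic means $\Eds[b_1 \mid \Yt]$ and $\Eds[b_2 \mid \Yt]$), and justifying the entry-wise use of Result~\ref{res:add_n2n} while conditioning on the whole vector $\Yt$ --- in particular that it contributes nothing off $\Omega$, where $\Yt$ carries no information about $N_j$. The complex-valued gradient calls for Wirtinger-style bookkeeping, but since $W_\Omega$ is real and diagonal, so that $W_\Omega^{H} W_\Omega = \Asq \Mo + \Moc$, this is routine.
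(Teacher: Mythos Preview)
Your proposal is correct and uses the same two probabilistic facts the paper relies on --- the Noisier2Noise relation $\Eds[\Mo\Nt\mid\Yt]=\alpha^2\Eds[\Mo N\mid\Yt]$ on sampled indices and the independence of $N_j$ from $\Yt$ on unsampled indices --- but the organisation is genuinely different. The paper decomposes the squared error as $\|\Mo(\cdot)\|_2^2+\|\Moc(\cdot)\|_2^2$ and proves two separate lemmas: on $\Omega$ it substitutes $\Mo\Yt=\Mo(Y+\Nt)$, $\Mo Y_0=\Mo(Y-N)$, expands the norm, and uses the Noisier2Noise relation to kill the cross term $\ft^H\Mo(N-\Nt/\alpha^2)$; on $\Omega^c$ it adds and subtracts $N$ and uses independence to kill $\ft^H\Moc N$. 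Your route instead observes that $\Yhn=W_\Omega f_\theta(\Yt)-\alpha^{-2}\Mo\Yt$, so both integrands in \eqref{eqn:nr2f_clm} are $\|W_\Omega f_\theta(\Yt)-b\|_2^2$ for two $\theta$-free vectors $b_1,b_2$; after differentiating, the claim collapses to the linear identity $\Eds[b_1\mid\Yt]=\Eds[b_2\mid\Yt]$, which you then verify entry-wise. What your approach buys is a single unified reduction that makes transparent \emph{why} only a gradient identity (not an equality of losses) holds: the two losses differ by a $\theta$-independent offset with matching conditional mean. What the paper's approach buys is that the $\Omega$/$\Omega^c$ split, and the resulting lemma structure, carries over verbatim to the Robust SSDU weighting (Claim~\ref{clm:Nr2N} for $\Yhr$), where the $(\Lambda\cap\Omega)^c$ block requires an additional $\Pc$-based argument; your unified reduction would need more reworking there because the two integrands no longer share a single $W f_\theta(\Yt)$ term.
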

\begin{proof}
See appendix \ref{app:Nr2N}. 
\end{proof}

We therefore suggest replacing the Noisier2Full loss stated in \eqref{eqn:theta_n2n_ssdu} with the right-hand-side of \eqref{eqn:nr2f_clm}, which   increases the weight of the indices in $\Omega$. Intuitively, it uses the ratio of noise removed at training, which has variance $\mathrm{Var}(\Nt) = \alpha^2 \sigma_n^2$, and the noise removed at inference, which has variance $\mathrm{Var}(N + \Nt) = (1 + \alpha^2) \sigma_n^2$, to compensate for the difference between the task at training and inference. The following result concerns the analogous expression for Robust SSDU.

\begin{claim} \label{clm:Nr2N} Consider the random variables $Y = \Mo (Y_0 + N)$ and $\Yt = \Mlo (Y +  \Nt)$, where $N$ and $\Nt$ are zero-mean Gaussian distributed with variances $\sigma_n^2$ and $\alpha^2 \sigma_n^2$ respectively. Define 
\begin{multline}
\Yhr = \Mlo \left ( \frac{ (1 + \alpha^2)f_{\theta}(\Yt) - \Yt }{\alpha^2} \right ) \\ + \Mloc f_{\theta^*}(\Yt)
\end{multline}
where $f_{\theta}$ is an arbitrary function. Then
\begin{multline}
	 \Eds \left[ \left\|\Yhr - Y_0 \right \|^2_2  | \Yt  \right] \\ = \dt \Eds \left[ \left\| W_{\Omega, \Lambda} M_\Omega (f_\theta ( \Yt) - Y) \right \|^2_2   | \Yt \right ] 
\end{multline}
where 
\begin{equation}
W_{\Omega, \Lambda} = \A \Mlo  + \mathcal{P}^{\frac{1}{2}}  \Moml 
\end{equation}
and $\mathcal{P} = \Eds[ \Moml] ^{-1}  \Eds[ \Mloc]$. \label{eqn:p_def}

\end{claim}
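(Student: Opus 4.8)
The plan is to mimic the proof of the Noisier2Full weighting (Claim 3) but to track the two disjoint index sets $\Lambda \cap \Omega$ and $\Omega \setminus \Lambda$ separately, since the $\alpha$-correction is only applied on $\Lambda \cap \Omega$ while the remaining sampled indices $\Omega \setminus \Lambda$ are only reweighted by $\mathcal{P}^{1/2}$. First I would expand $\Yhr - Y_0$ using the definitions, writing $\Mlo + \Mloc = \eye$ to split the error into a piece on $\Lambda\cap\Omega$ and a piece on $(\Lambda\cap\Omega)^c$. On $(\Lambda\cap\Omega)^c$ the term is simply $\Mloc(f_\theta(\Yt) - Y_0)$, with no $\alpha$-dependence, so its $\theta$-gradient is straightforward. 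On $\Lambda\cap\Omega$, substituting $\Yt = \Mlo(Y + \Nt)$ gives, entrywise on these indices, $\frac{(1+\alpha^2) f_\theta(\Yt)_j - (Y_j + \Nt_j)}{\alpha^2} - Y_{0,j}$, which I would rearrange as $\frac{1+\alpha^2}{\alpha^2}\bigl(f_\theta(\Yt)_j - (Y_{0,j} + N_j)\bigr)$ plus terms that do not depend on $\theta$ (collecting the $N_j$, $\Nt_j$ and $Y_{0,j}$ constants), exactly as in the additive Noisier2Noise manipulation underlying Result 1.

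Next I would take $\dt$ of the conditional expectation of $\|\Yhr - Y_0\|_2^2$. Because cross terms between the two index blocks vanish (disjoint supports), the gradient decomposes as a sum over $\Lambda\cap\Omega$ and over $\Omega\setminus\Lambda$ (note $(\Lambda\cap\Omega)^c \cap \Omega = \Omega\setminus\Lambda$, and indices outside $\Omega$ contribute nothing to either side since $f_\theta(\Yt)$ on $\Omega^c$ is compared only against $Y_{0}$ there — I would need to check this block cancels or is handled by the $\Moc$-free form of $W_{\Omega,\Lambda}$; in fact $\Yhr$ on $\Omega^c$ equals $f_\theta(\Yt)$ and the target $Y_0$ there is not reweighted, so I must confirm the claimed identity only concerns the $\Omega$-supported part, consistent with the $M_\Omega$ appearing on the right-hand side). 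On $\Lambda\cap\Omega$, the $\dt$ of the expected squared term with the $\frac{1+\alpha^2}{\alpha^2}$ prefactor produces $\Asq$ times the gradient of $\Eds[\|\Mlo(f_\theta(\Yt) - Y_0 - N)\|^2 \mid \Yt]$, and since $\Mlo(Y_0+N) = \Mlo Y$ this matches the $\Mlo$-block of the right-hand side with weight $\A$ inside the norm (squared to $\Asq$). On $\Omega\setminus\Lambda$, I would use the key averaging identity: conditioning on $\Yt$ leaves $\Moml$ random, and $\Eds[\Moml \mid \Yt] = \Eds[\Moml]$ by independence of the mask realization (as in the SSDU proof), so the gradient of $\Eds[\|\Moml(f_\theta(\Yt) - Y_0 - N)\|^2 \mid \Yt]$ equals $\Tr$-weighted by $\Eds[\Moml]$, while on the left-hand side the $\Mloc$ term contributes $\Eds[\Mloc]$-weighting; the definition $\mathcal{P} = \Eds[\Moml]^{-1}\Eds[\Mloc]$ is precisely what converts one into the other, giving the $\mathcal{P}^{1/2}\Moml$ block of $W_{\Omega,\Lambda}$.

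The main obstacle I anticipate is the $\Omega\setminus\Lambda$ bookkeeping: on the left-hand side this block of $\Yhr$ is $\Mloc f_{\theta^*}(\Yt)$ (note the statement even writes $f_{\theta^*}$, presumably a typo for $f_\theta$), so its error against $Y_0$ is $\Moml(f_\theta(\Yt) - Y_0)$ with \emph{no} noise term, whereas the right-hand side carries $f_\theta(\Yt) - Y$ which on $\Omega\setminus\Lambda$ is $f_\theta(\Yt) - Y_0 - N$. Reconciling these requires the same trick as in Claim 3 — the $N$-dependent part has zero $\theta$-gradient, or more carefully, $\dt\Eds[\|\Moml(f_\theta - Y_0 - N)\|^2\mid\Yt] = \dt\Eds[\|\Moml(f_\theta - Y_0)\|^2\mid\Yt]$ because the extra terms are $\dt\Eds[2\langle \Moml(f_\theta - Y_0), N\rangle\mid\Yt]$ and, crucially, on $\Omega\setminus\Lambda$ the index is sampled in $Y$ but \emph{not} in $\Yt$, so $N_j$ is \emph{not} independent of $\Yt$ there — one must instead argue via the tower property over the mask realization that the mask-averaged cross term still cancels or is $\theta$-independent. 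Getting that conditional-independence-versus-mask-averaging argument exactly right, and matching the resulting covariance weighting to the stated $\mathcal{P}$, is where the real care is needed; the rest is the routine $\dt$ and $\Tr$ algebra already exercised in the appendix proofs of Claims 1--3.
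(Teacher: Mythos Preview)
Your decomposition and overall plan match the paper's proof: split via $\Mlo+\Mloc=\eye$, reuse Lemma~1 with $\Omega\to\Lambda\cap\Omega$ on the first block to produce the $(1+\alpha^2)/\alpha^2$ weight, then on the second block convert the $\Mloc$-weighted target loss into the $\mathcal P^{1/2}\Moml$-weighted observable loss. However, two of your anticipated obstacles are miscalibrated and would send you down unnecessary detours.

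Your assertion that on $\Omega\setminus\Lambda$ the noise $N_j$ is \emph{not} independent of $\Yt$ is wrong. For any $j\notin\Lambda\cap\Omega$ the entry $N_j$ simply does not appear in $\Yt=\Mlo(Y_0+N+\Nt)$, and under the white-noise assumption $N_j$ is independent of the remaining $N_k$, of $Y_0$, $\Nt$ and the masks; hence $\Eds[N_j\mid\Yt]=0$ directly. The paper therefore just invokes Lemma~2 with $\Omega^c$ replaced by $(\Lambda\cap\Omega)^c$ to swap $f_\theta-Y_0$ for $f_\theta-Y_0-N$ on the entire $\Mloc$ block --- no tower-property argument over the mask is needed. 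Separately, your claim $\Eds[\Moml\mid\Yt]=\Eds[\Moml]$ is false as stated (for instance $\Yt_j\neq 0$ forces $(\Moml)_{jj}=0$, so the mask is certainly not independent of $\Yt$). The correct device, taken from the cited SSDU analysis, is the conditional identity on $\{\Yt_j=0\}$: there $(\Mloc)_{jj}=1$ almost surely while $\Pds[j\in\Omega\setminus\Lambda\mid\Yt_j=0]=p_j(1-\pt_j)/(1-p_j\pt_j)=\mathcal P_{jj}^{-1}$, and this is what relates the $\Mloc$- and $\Moml$-weighted squared errors. This same identity also resolves your $\Omega^c$ worry: since $(\Lambda\cap\Omega)^c=(\Omega\setminus\Lambda)\cup\Omega^c$, the ratio $\mathcal P=\Eds[\Moml]^{-1}\Eds[\Mloc]$ already folds the $\Omega^c$ contribution of the left-hand side into the $\Moml$-weighted right-hand side, so no separate $\Omega^c$ bookkeeping is required and the $M_\Omega$ on the right is correct.
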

\begin{proof}
See appendix \ref{app:rssdu}. 
\end{proof}
The $\Mlo$ coefficient has a similar role to the $M_\Omega$ coefficient in  \eqref{eqn:nr2f_clm}. The $\Moml$ coefficient compensates for the variable density of $\Omega$ and $\Lambda$, and was first proposed in \cite{millard2022theoretical}, where it was shown to improve the reconstruction quality and robustness to the distribution of $\Lambda$ for standard SSDU without denoising.\footnote{Where \cite{millard2022theoretical} uses $(\eye - K)^{-1}$, this paper uses the more compact $\Pc$.} 

The weightings can be thought of as entry-wise modifications of the learning rate \cite{millard2022theoretical}. Neither weighting matrices change $\theta^*$, so the proofs of Noisier2Full and Robust SSDU from Sections \ref{sec:denoi_prop} and \ref{sec:denoi_and_dealias} hold. Rather, the role of the weights is to improve the finite-sample case in practice, where   $\theta^*$ is estimated with $\hat{\theta}$: see Section \ref{sec:results} for an empirical evaluation. Throughout the remainder of this paper, ``Noisier2Full" and ``Robust SSDU" refer to the versions with the loss weightings proposed in this section and versions without such weightings are explicitly referred to as ``Unweighted Noisier2Full" and ``Unweighted  Robust SSDU".


\section{Materials and methods}
\label{sec:exp_meth}

\subsection{Description of data}

We primarily used the multi-coil brain data from the publicly available fastMRI dataset \cite{zbontar2018fastmri}\footnote{available from \url{https://fastmri.med.nyu.edu}}. 
We only used data that had 16 coils, so that the training, validation and test sets contained 2004, 320 and 224 slices respectively. The slices were normalized so that the cropped RSS estimate had maximum 1. Here, the cropped RSS is defined as $ Z ((\sum_c^{N_c} | F^H {y}_c|^2)^\frac{1}{2})$, where the subscript $c$ refers to all entries on the $c$th coil, $F$ is the discrete Fourier transform, $N_c$ is the number of coils and $Z$ is an operator that crops to a central $320 \times 320$ region. RSS images were used for normalization and visualization only; otherwise, the raw complex multi-coil k-space data was used. We retrospectively sub-sampled column-wise with the central 10 lines fully sampled and randomly drawn with polynomial density otherwise, with the probability density scaled to achieve a desired acceleration factor $R_\Omega=q/\sum_j p_j$. For $R_\Omega = 4$ and $\sigma_n = 0.04$, we also trained the methods on 2D Bernoulli sampling, where the sampling was random and independent, also with polynomial variable density. For each case, the distribution of  $M_\Lambda$ was the same type as the first \cite{millard2022theoretical}.  The data was treated as noise-free, and we generated white, complex Gaussian measurement noise with standard deviation $\sigma_n$ to simulate noisy conditions.

We also tested the methods' performance on the 0.3T dataset M4Raw \cite{lyu2023m4raw}. For this dataset, which is prospectively low SNR, no further noise was added. Rather, the noise covariance matrix was estimated using the fully-sampled image via a $30 \times  30$ square of background from each corner and the data was whitened by left-multiplying with the inverse covariance matrix, so that all data had noise standard deviation 1. The same column-wise sub-sampling was used as described above for the fastMRI data.  Although more realistic that the simulated noise setting of fastMRI, for M4Raw we have no ``ground truth", so it was only possible to evaluate the methods' performance qualitatively.

 An implementation of our method in PyTorch is available on GitHub\footnote{\url{https://github.com/charlesmillard/robust_ssdu}}.


\subsection{Comment on the proposed methods in practice}

The theoretical guarantees for Noisier2Full and Robust SSDU use the further noisy, possibly further sub-sampled $\yt_s$ as the input to the network at inference. 
In practice, as suggested in the original Noisier2Noise \cite{moran2020noisier2noise} and SSDU  \cite{yaman2020self} papers, we used $y_s$ as the input to the network at inference, so that the estimate
\begin{equation}
	\hat{y}_s = M_{\Omega_s}  \left( \frac{(1+\alpha^2)f_{\hat{\theta}} (y_s) - y_s}{\alpha^2} \right) + M_{\Omega_s^c}  f_{\hat{\theta}} (y_s)  
\end{equation} 
is used in place of \eqref{eqn:yhat} and \eqref{eqn:yhat_n2n_ssdu}. Although this deviates from strict theory, and is not guaranteed to be correct in conditional expectation, we have found that it achieves better  reconstruction performance in practice: see \cite{moran2020noisier2noise} and \cite{millard2022theoretical} for a detailed empirical evaluation. All subsequent results for the proposed methods use this estimate at inference.

\subsection{Comparative training methods \label{sec:comp}}


The training methods evaluated in this paper are summarized in Table \ref{tab:meths}.

For noise-free, fully sampled training data, fully supervised training can be employed, where the loss is computed between the output of the network $f_\theta (y_t)$ and the noise-free, fully sampled target $y_{0,t}$: see Table \ref{tab:meths}. Although it is possible in principle to have higher SNR data at training than at inference by acquiring multiple averages \cite{lyu2023m4raw}, such datasets would require an extended acquisition time and are rare in practice. Nonetheless, training a network on this type of data via simulation is instructive as a best-case target. This method is referred to as the ``fully-supervised benchmark" throughout this paper.  

For noisy, fully sampled training data, we employed three training methods: Unweighted Noisier2Full, Noisier2Full and the standard approach Supervised w/o denoising, as described in Section \ref{sec:theory_prop}. We did not compare with Noise2Inverse \cite{hendriksen2020noise2inverse} as it was designed for learned denoising but fixed reconstruction operators.

For the more challenging scenario where noisy, sub-sampled training data is available, we compared Robust SSDU with the original version of SSDU, which reconstructs sub-sampled data but does not denoise. We refer to this as ``Standard SSDU". We also compared with Noise2Recon-SS \cite{desai2023noise2recon}, which, like Robust SSDU, includes adding further noise to the sub-sampled data. However, Noise2Recon-SS has a number of key differences to the method proposed in this paper. With an $\ell_2$ k-space loss, training with Noise2Recon-SS consists of minimizing  
\begin{multline}
 \hat{\theta} =	\underset{\theta}{\arg \min} \sum_{t \in \mathcal{T}} \| M_{\Omega_t \setminus \Lambda_t} (f_\theta(M_{\Lambda_t} y_t) - y_t) \|^2_2  \\ + \lambda \| f_\theta( y_t + M_{\Omega_t}\nt_t) - f_\theta(M_{\Lambda_t} y_t)\|^2_2, \label{eqn:n2r_loss}
\end{multline}
where $\lambda$ is a hand-selected weighting. We used $\lambda = 1$ throughout. The $\ell_2$ loss in k-space was used so that it could be fairly compared to the other methods in this paper, but we note that \cite{desai2023noise2recon} used image-domain losses. The first term is based on SSDU, and the second ensures that $f_\theta( y_t + M_{\Omega_t}\nt_t)$ and $f_\theta(M_{\Lambda_t} y_t)$ yield similar outputs, so that the method is in a sense robust to $\nt_t$. At inference, Noise2Recon-SS uses $\hat{y}_s = f_{\hat{\theta}}(y_s)$; there is no correction term. We emphasize that, unlike the proposed Robust SSDU, there is no theoretical evidence that Noise2Recon-SS recovers the clean image in expectation. 

In \cite{koonjoo2021boosting}, an untrained denoising algorithm was appended to a reconstruction network.  To test this, we denoised the RSS output of Supervised w/o denoising and Standard SSDU with the popular BM3D algorithm \cite{Dabov2007}, which is designed for Gaussian noise. Although the measurement noise is Gaussian, the reconstruction error of the RSS image is not Gaussian in general \cite{Milla2020}. Therefore, unlike the proposed methods, BM3D does not accurately model the noise characteristics  \cite{Virtue2017}. Nonetheless, we found that these methods performed reasonably well in practice.



                                                                      
\begin{figure}[t]
\centering
	\includegraphics[width=0.5\textwidth]{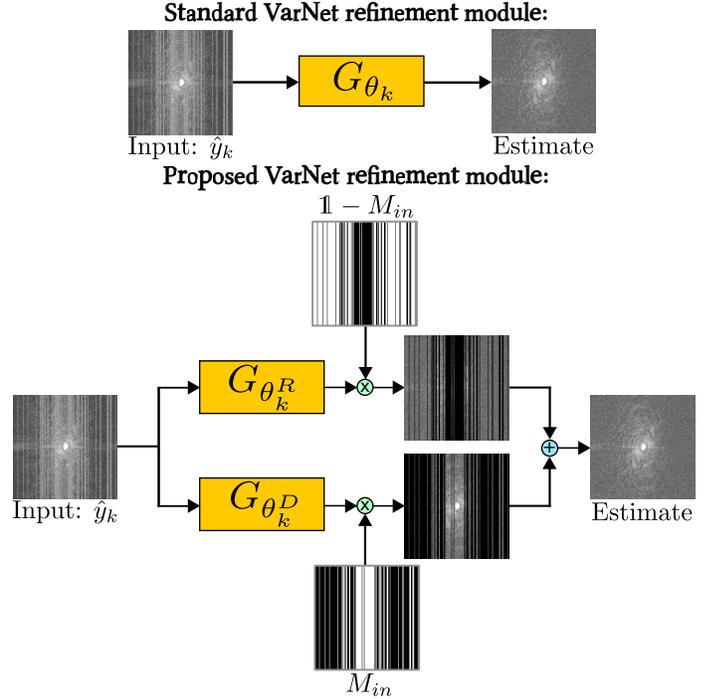}
	\caption{The refinement module for the proposed architecture Denoising VarNet, which trains two networks in parallel, removing noise and aliasing separately. \label{fig:prop_refine}}
\end{figure}

\newcolumntype{P}[1]{>{\centering\arraybackslash}p{#1}}
\begin{table*}[th]
\centering
\scriptsize
\begin{tabular}{c|c|P{60mm}|c}
\textsc{Name}         & \textsc{Training data} & \textsc{Loss}                                             & \textsc{Estimate at inference}  \\\hhline{=|=|=|=}
Fully-supervised benchmark & $y_{0,t}$ & $\sum_{t \in \mathcal{T}} \| f_\theta(y_t) - y_{0,t} \|^2_2$ & $f_{\theta} (y_s)$                                               \\[0pt] \hhline{=|=|=|=}
Supervised w/o denoising   & $y_{0,t} + n_t $ & $\sum_{t \in \mathcal{T}} \| f_\theta(y_t) - (y_{0,t} + n_t) \|^2_2$                                            & $f_{\hat{\theta}} (y_s)$       \\[0pt] \hline 
Supervised with BM3D denoising   & $y_{0,t} + n_t $ & $\sum_{t \in \mathcal{T}} \| f_\theta(y_t) - (y_{0,t} + n_t) \|^2_2$                                            & BM3D$(f_{\hat{\theta}} (y_s))$                                                                                                                                                                                                                 \\[0pt] \hline 
Noisier2Full$^*$  & $y_{0,t} + n_t $ &  $\sum_{t \in \mathcal{T}} \| W_{\Omega_t}(f_\theta(y_{t} + M_{\Omega_t} \nt_t) - (y_{0,t} + n_t) )\|^2_2 $ & $M_{\Omega_s}  \left( \frac{(1+\alpha^2)f_{\hat{\theta}} (y_s) - y_s}{\alpha^2} \right) + M_{\Omega_s^c}  f_{\hat{\theta}} (y_s)$                                                                                                      \\[0pt] \hhline{=|=|=|=}
Standard SSDU  & $y_t$ & $\sum_{t \in \mathcal{T}} \| M_{\Omega_t \setminus \Lambda_t} (f_\theta(M_{\Lambda_t} y_t) - y_t) \|^2_2$ & $f_{\hat{\theta}} (y_s)$                                                                                                                                                                                                           \\[0pt] \hline
SSDU  with BM3D & $y_t$ & $\sum_{t \in \mathcal{T}} \| M_{\Omega_t \setminus \Lambda_t} (f_\theta(M_{\Lambda_t} y_t) - y_t) \|^2_2$ & BM3D$(f_{\hat{\theta}} (y_s))$                                                                                                                                                                                                           \\[0pt] \hline
\begin{tabular}{c} Noise2Recon-SS \end{tabular}& \begin{tabular}{c}$y_t$ \end{tabular} & $\sum_{t \in \mathcal{T}} \| M_{\Omega_t \setminus \Lambda_t} (f_\theta(M_{\Lambda_t} y_t) - y_t) \|^2_2$ \newline $+ \lambda \| f_\theta( y_t + M_{\Omega_t}\nt_t) - f_\theta(M_{\Lambda_t} y_t)\|^2_2$ & \begin{tabular}{c} $f_{\hat{\theta}} (y_s)$\end{tabular}                                                                                                       \\[0pt] \hline
Robust SSDU$^*$  & $y_t$ & $\sum_{t \in \mathcal{T}} \| W_{\Omega_t, \Lambda_t} M_{\Omega_t}( f_\theta(M_{\Lambda_t\cap \Omega_t} (y_{t} + \nt_t)) - y_t) \|^2_2$ & $M_{\Omega_s}  \left( \frac{(1+\alpha^2)f_{\hat{\theta}} (y_s) - y_s}{\alpha^2} \right) + M_{\Omega_s^c}  f_{\hat{\theta}} (y_s)$                                                                                                     
\end{tabular}
\caption{The training methods evaluated in this paper, where $y_t = M_{\Omega_t} ( y_{0,t} + n_t )$ and the asterisk denotes the proposed methods. Here, and throughout this paper, the subscripts $t$ and $s$ index the training and test sets respectively. The function BM3D$(\cdot)$ is defined here to include an RSS transform, so that the denoiser acts on the RSS image. The double lines are used to separate types of data available for training. The unweighted variants of Noisier2Full and Robust SSDU, which are not stated here for brevity, are equivalent to the weighted versions with $W_{\Omega_t}= \eye$ and $W_{\Omega_t, \Lambda_t} = \eye$. \label{tab:meths}}
\end{table*}

\subsection{Network architecture}

For all methods considered in this paper, the function $f_\theta$ is defined to be k-space to k-space, but is otherwise agnostic to the network architecture. Architectures can include inverse Fourier transforms, so convolutional layers may be applied in the image domain. We emphasize that the experiments in this paper are designed to compare the performance of the training \textit{method}, not to provide a comprehensive evaluation of possible architectures, which is a somewhat orthogonal goal. 

We employed a network architecture based on the Variational Network (VarNet) \cite{hammernik2018learning, sriram2020end}, which is available as part of the fastMRI package \cite{zbontar2018fastmri}. VarNet consists of a coil sensitivity map estimation module followed by a series of  ``cascades".  The k-space estimate at the $k$th cascade takes the form
\begin{equation}
	\hat{y}_{k+1} = \hat{y}_k - \eta_k M_{in} (\hat{y}_k - y_{in}) + G_{\theta_k} ( \hat{y}_{k})
\end{equation}
where $y_{in}$ and $M_{in}$ are the input k-space and sampling mask respectively and the $t$ or $s$ index has been dropped for legibility. We use the generic subscript $in$ here because the input is not the same for every method: for instance, fully-supervised and Noisier2Full have $M_{in} = M_{\Omega_t}$ and $M_{in} = M_{\Lambda_t \cap \Omega_t}$ respectively. Here, $\eta_k$ is a trainable parameter and $G_{\theta_k} ( \hat{y}_{k})$ is a neural network with cascade-dependent parameters $\theta_k$, referred to as a ``refinement module", which was an image-domain U-net \cite{ronneberger2015u} with real weights in \cite{hammernik2018learning, sriram2020end}. 

VarNet was originally constructed for reconstruction only, without explicit denoising. For joint reconstruction and denoising, we propose partitioning $G_{\theta_k} ( \hat{y}_{k})$ into two functions,
\begin{align}
	G_{\theta_k} ( \hat{y}_{k}) = M_{in} G_{\theta^D_k} ( \hat{y}_{k}) + (\eye - M_{in})G_{\theta^R_k} ( \hat{y}_{k}).
\end{align}
This refinement module is illustrated in Fig. \ref{fig:prop_refine}. We refer to the architecture with the proposed refinement module as ``Denoising VarNet" throughout this paper.  We used a U-net \cite{ronneberger2015u} for both $G_{\theta^D_k} ( \hat{y}_{k})$ and $G_{\theta^R_k} ( \hat{y}_{k})$, although we note that in general these functions need not be the same. We used 5 cascades, giving a network with $2.5 \times 10^7$ parameters.

\subsection{Training details}         
                          
 We used the Adam optimizer \cite{kingma2014adam} and trained for 100 epochs with learning rate $10^{-3}$.  The $\Omega_t$ and $n_t$ were fixed but the $\Lambda_t$ and $\nt_t$ were re-generated once per epoch \cite{9433924}, which we found considerably reduced susceptibility to overfitting. As  in \cite{millard2022theoretical}, we used the same distribution of $\Lambda_t$ as $\Omega_t$ but with parameters selected to give a sub-sampling factor of $R_\Lambda = q/\sum_j \pt_j =  2$ unless otherwise stated. The choice of $\alpha$ is discussed in Section \ref{sec:alpha_rob}. Unless otherwise stated, the training methods were evaluated on data generated with $\sigma_n \in \{0.02, 0.04, 0.06, 0.08\}$ and $R_\Omega \in \{4, 8\}$. We note that the noise standard deviation, not the SNR, was fixed and that for each training method, $\sigma_n$ and $R_\Omega$, we trained a separate network from scratch. 

\subsection{Performance metrics}

Since each of the methods were trained using a squared error loss in k-space, we primarily focused on the k-space normalized mean squared error (NMSE) over the test set, defined as $\frac{1}{|\mathcal{S}|} \sum_{s \in \mathcal{S}} \| \hat{y}_s - y_{0,s}\|^2_2  / \|y_{0,s}\|^2_2$ where $\hat{y}_s$ is an estimate of k-space. Since the score is in k-space, it was not possible to compute the NMSE of methods that employed BM3D, which acts on the magnitude image so does not retain the complex phase. 

We also computed the mean Structural Similarity (SSIM) \cite{Wang2004} on the RSS images. We emphasize that the networks were \textit{not} trained to minimize the SSIM directly, so such scores are somewhat incidental to the primary NMSE results and not necessarily fundamental to the method.

\section{Results \label{sec:results}}



\begin{figure}[]
\centering
	\includegraphics[width=0.35\textwidth]{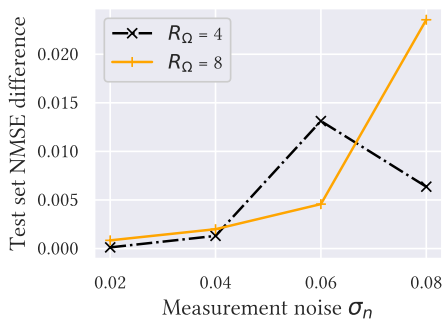}
	\caption{The difference between the test set loss of Standard VarNet and the proposed Denoising VarNet for the benchmark training method. All differences are positive, showing that Denoising VarNet outperforms Standard VarNet, especially for large $\sigma_n$. \label{fig:arch}}
\end{figure}

\begin{figure}[]
\centering
	\includegraphics[width=0.45\textwidth]{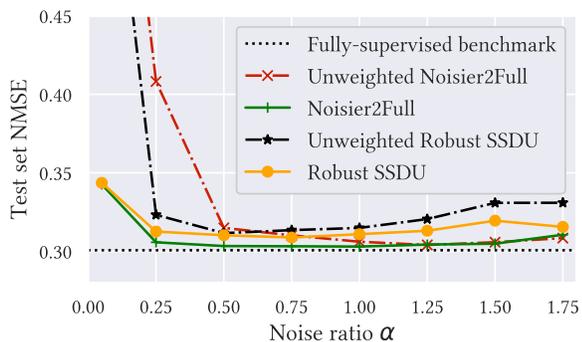}
	\caption{The robustness to  $\alpha$  of Noisier2Full, Robust SSDU and their weighted versions at $R_\Omega = 8$ and $\sigma_n = 0.06$. The performance of the fully-supervised benchmark, which does not depend on $\alpha$, is also shown. The weighted versions are substantially more robust, especially for small $\alpha$: at $\alpha = 0.05$, the values of unweighted Noisier2Full and Robust SSDU, which excluded for visualization,  are 0.70 and 0.62 respectively. \label{fig:alpha_line}}
\end{figure}

\renewcommand{\arraystretch}{1.5}
\begin{table*}[t]
\centering
\fontsize{5.6pt}{8pt}\selectfont
\begin{tabular}{c||c|c|c|c||c|c|c|c}
\multicolumn{1}{c}{} & \multicolumn{4}{c}{\textsc{Acceleration factor} $R_\Omega=4$ } & \multicolumn{4}{c}{\textsc{Acceleration factor} $R_\Omega=8$ } \\ 
      & $\sigma_n=0.02$ & $\sigma_n = 0.04$ & $\sigma_n = 0.06$ & $\sigma_n = 0.08$ & $\sigma_n=0.02$ & $\sigma_n = 0.04$ & $\sigma_n = 0.06$ & $\sigma_n = 0.08$  \\\hhline{=#=|=|=|=#=|=|=|=}
Noisy \& sub-sampled    & $0.210\pm0.01$ & $0.434\pm0.01$ & $0.809\pm0.02$ & $1.333\pm0.02$ & $0.207\pm0.02$ & $0.337\pm0.02$ & $0.554\pm0.02$ & $0.857\pm0.02$    \\[0pt] \hline
Fully-supervised benchmark & $\mathbf{0.167\pm0.02}$ & $\mathbf{0.313\pm0.02}$ & $\mathbf{0.537\pm0.02}$ & $\mathbf{0.850\pm0.02}$ & $\mathbf{0.160\pm0.02}$ & $\mathbf{0.217\pm0.02}$ & $\mathbf{0.301\pm0.02}$ & $\mathbf{0.414\pm0.02}$\\[0pt] \hhline{=#=|=|=|=#=|=|=|=}
Supervised w/o denoising & ${0.187\pm0.02}$ & ${0.412\pm0.02}$ & ${0.788\pm0.02}$ & ${1.314\pm0.02}$ & ${0.178\pm0.02}$ & ${0.310\pm0.02}$ & ${0.527\pm0.02}$ & ${0.833\pm0.02}$\\[0pt] \hline
Unweighted Noisier2Full* & ${0.170\pm0.02}$ & ${0.319\pm0.02}$ & ${0.548\pm0.02}$ & ${0.870\pm0.02}$ & ${0.164\pm0.02}$ & ${0.223\pm0.02}$ & ${0.315\pm0.02}$ & ${0.441\pm0.02}$\\[0pt] \hline
Noisier2Full* & $\mathbf{0.169\pm0.02}$ & $\mathbf{0.312\pm0.02}$ & $\mathbf{0.538\pm0.02}$ & $\mathbf{0.853\pm0.02}$ & $\mathbf{0.162\pm0.02}$ & $\mathbf{0.220\pm0.02}$ & $\mathbf{0.305\pm0.02}$ & $\mathbf{0.422\pm0.02}$\\[0pt] \hhline{=#=|=|=|=#=|=|=|=}
Standard SSDU & ${0.188\pm0.01}$ & ${0.413\pm0.01}$ & ${0.787\pm0.01}$ & ${1.310\pm0.01}$ & ${0.180\pm0.01}$ & ${0.312\pm0.01}$ & ${0.531\pm0.01}$ & ${0.838\pm0.01}$\\[0pt] \hline
Noise2Recon-SS & ${0.180\pm0.02}$ & ${0.377\pm0.02}$ & ${0.623\pm0.02}$ & ${0.975\pm0.02}$ & ${0.173\pm0.02}$ & ${0.260\pm0.02}$ & ${0.452\pm0.02}$ & ${0.691\pm0.02}$\\[0pt] \hline
Unweighted Robust SSDU* & ${0.170\pm0.02}$ & $\mathbf{0.314\pm0.02}$ & ${0.548\pm0.02}$ & ${0.863\pm0.02}$ & $\mathbf{0.162\pm0.02}$ & $\mathbf{0.222\pm0.02}$ & $\mathbf{0.309\pm0.02}$ & ${0.424\pm0.02}$\\[0pt] \hline
Robust SSDU* & $\mathbf{0.169\pm0.02}$ & ${0.315\pm0.02}$ & $\mathbf{0.543\pm0.02}$ & $\mathbf{0.862\pm0.02}$ & $\mathbf{0.162\pm0.02}$ & ${0.224\pm0.02}$ & $\mathbf{0.309\pm0.02}$ & $\mathbf{0.423\pm0.02}$\\[0pt]
\end{tabular}
\caption{The methods' test set NMSE on the fastMRI multi-coil brain dataset with standard errors. The double lines separate the type of training data available and bold font is used to denote the best performance within each category. Methods that use BM3D could not be included because the NMSE was computed in k-space and BM3D acts on the magnitude image, so the complex phase is not retained. Table \ref{tab:ssim}  shows a similar table for SSIM. \label{tab:perf}}
\end{table*}


\subsection{Evaluation of Denoising VarNet}

To evaluate the performance of the proposed Denoising VarNet architecture, we trained the best-case baseline for Standard VarNet with 10 cascades and the Denoising VarNet with 5 cascades, so that they had roughly the same number of parameters. Fig. \ref{fig:arch} shows that Denoising VarNet outperforms Standard VarNet on the test set for all considered $R_\Omega$ and $\sigma_n$, especially for more challenging acceleration factors and noise levels.

\subsection{Robustness to $\alpha$ \label{sec:alpha_rob}}

To evaluate the robustness to $\alpha$, we trained Noisier2Full, Robust SSDU and their weighted variants for $\alpha \in \{0.05, 0.25, 0.5, 0.75, 1, 1.25, 1.5, 1.75 \} $. We focused solely on the case where $R_\Omega = 8$ and $\sigma_n = 0.06$. The performance on the test set is shown in Fig. \ref{fig:alpha_line}, which shows that the weighted versions are considerably more robust. The weighted and unweighted minima were at $\alpha = 1$ and $1.25$ for Noisier2Full and $\alpha = 0.75$ and $0.5$ for Robust SSDU respectively. We employed these values of $\alpha$ for all experiments in Section \ref{sec:task_A_res} and \ref{sec:task_B_res}; we assumed that the tuned $\alpha$ at $R_\Omega = 8$ and $\sigma_n = 0.06$ is a reasonable approximation of the optimum for every every evaluated $R_\Omega$ and $\sigma_n$. 

\subsection{Task A: Fully sampled, noisy training data \label{sec:task_A_res}}

Rows 3-5 of Table \ref{tab:perf} show the how the test set NMSE of networks trained on fully sampled, noisy data compares with the fully-supervised benchmark.  Supervised w/o denoising's performance significantly degrades as $\sigma_n$ increases: for $R_\Omega = 8$ and $\sigma_n = 0.08$, Supervised w/o denoising's test set loss is approximately double that of the fully-supervised benchmark. In contrast, Noisier2Full consistently performs similarly to the benchmark: its NMSE is within 0.008 for all $\sigma_n$ and $R_\Omega$. The performance of Unweighted Noisier2Full was slightly poorer than the weighted version, especially for high noise levels at the more challenging acceleration factor $R_\Omega = 8$.   Two reconstruction examples are shown in Fig. \ref{fig:examples_full}. Here, and throughout this paper, the example reconstructions show the image domain RSS cropped to a central $320 \times 320$ region. The k-space NMSE and SSIM are also shown. Appendix \ref{app:tab} shows the mean SSIM on the test set for all methods.

\begin{figure*}[t]
\centering
	\includegraphics[width=\textwidth]{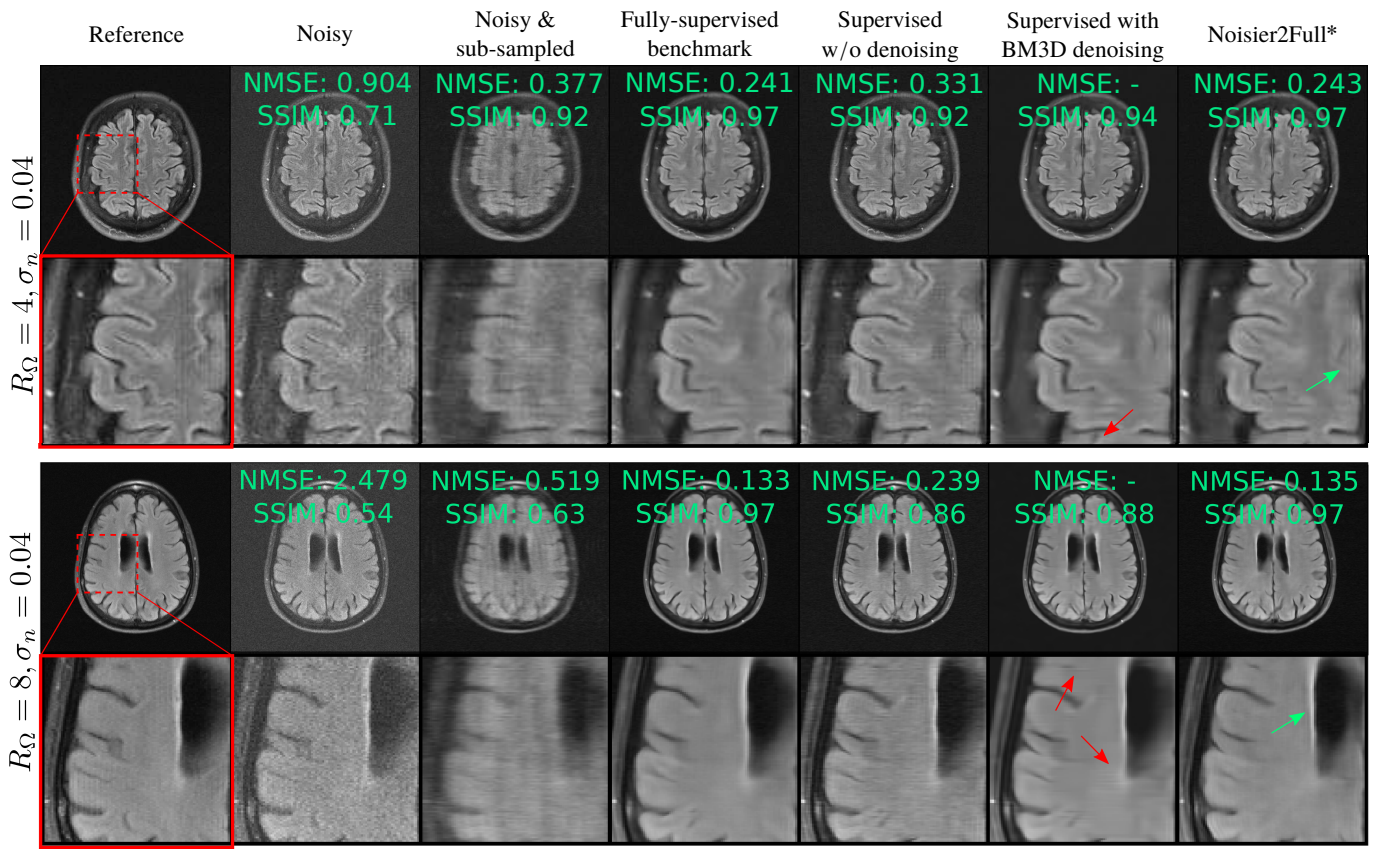}
	\caption{ Reconstructions when fully sampled, noisy  data is available for training. ``Noisy" and ``Noisy \& sub-sampled" refer to the RSS reconstruction of $y_{0,s} + n_s$ and $M_{\Omega_s} (y_{0,s} + n_s)$ respectively.  While there is clear noise in Supervised w/o denoising's reconstruction, the proposed method, which is indicated with an asterisk, perform very similarly to the fully-supervised benchmark. The red arrows show artifacts for Supervised with BM3D and the and green arrows show improved recovery and contrast of fine features for Noisier2Full \label{fig:examples_full} }
\end{figure*}


\begin{figure*}[t]
\centering
	\includegraphics[width=\textwidth]{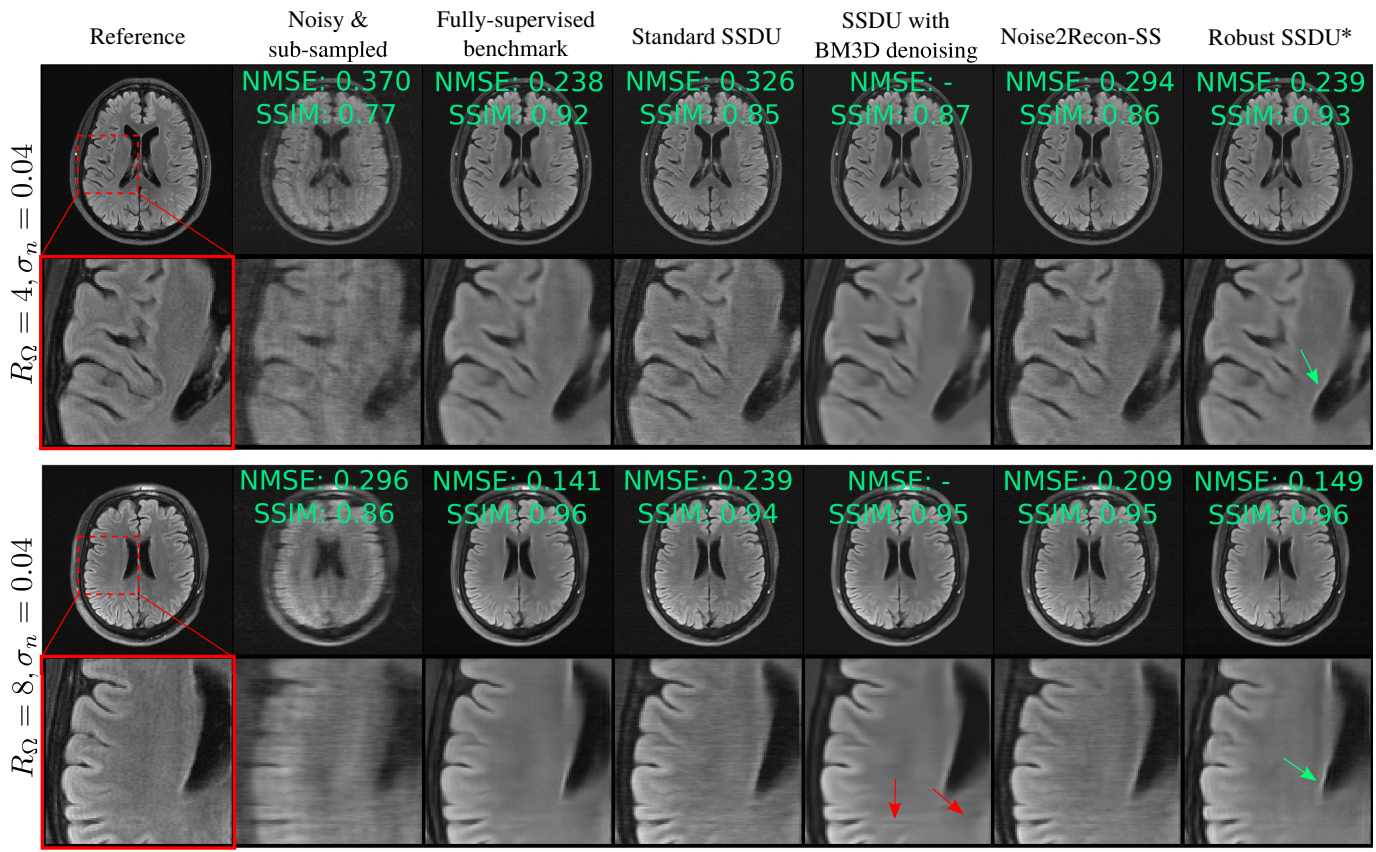}
	\caption{ Example reconstructions for networks trained on noisy, sub-sampled data. The proposed method Robust SSDU, highlighted with an asterisk, perform very similarly to the fully-supervised benchmark, even at $R_\Omega = 8$. \label{fig:examples_ss}}
\end{figure*}


\begin{figure}[]
\centering
	\includegraphics[width=0.5\textwidth]{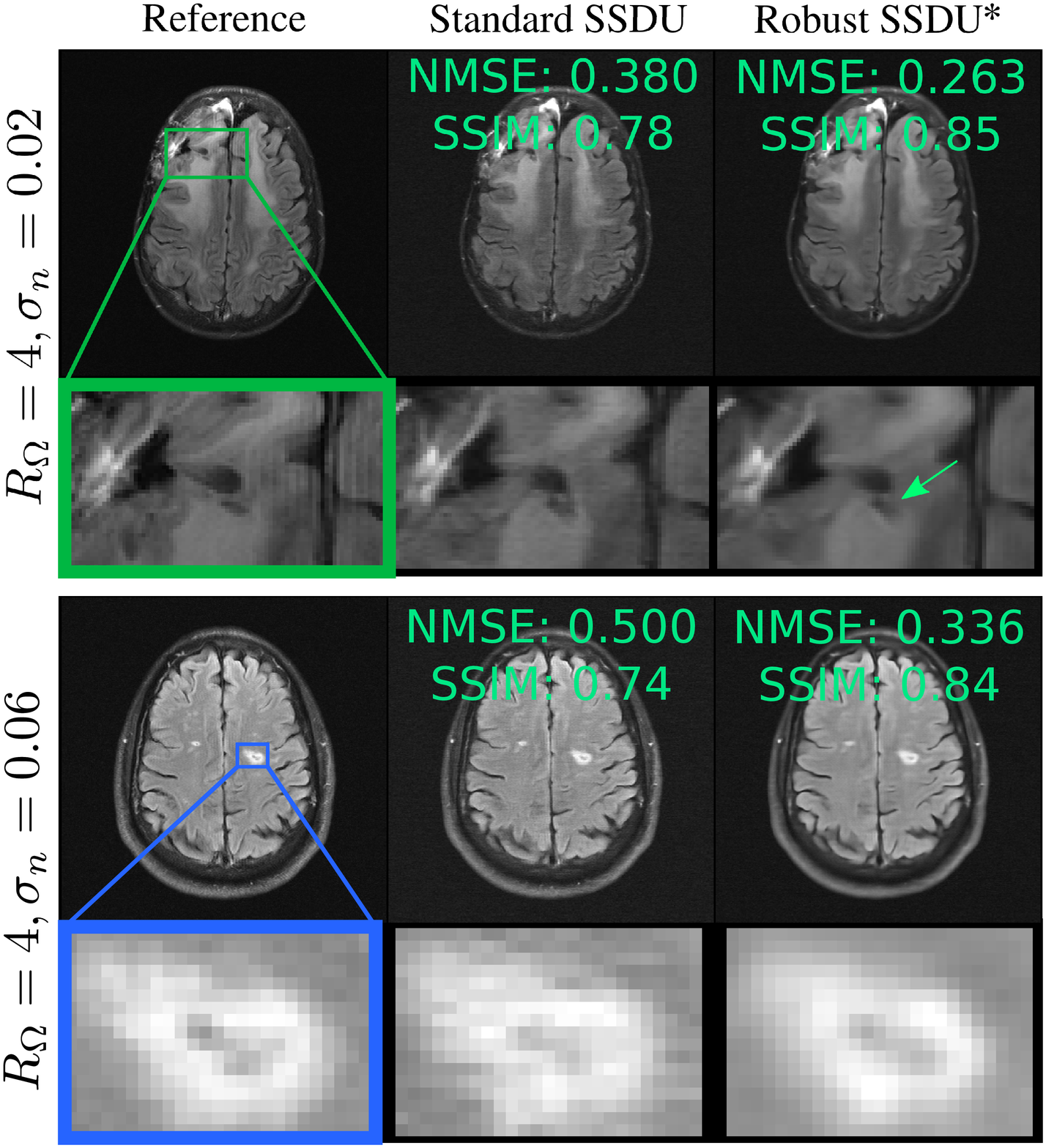}
	\caption{Clinical regions of interest annotated via fastMRI+ \cite{zhao2022fastmri+}. The top image shows  a resection cavity and the bottom shows a lacunar infarct. The proposed method Robust SSDU has improved sharpness compared to Standard SSDU, which has reconstruction errors arising from measurement noise. \label{fig:path}}
\end{figure}

\subsection{Task B: Sub-sampled, noisy training data \label{sec:task_B_res}}

Rows 6-9 of Table \ref{tab:perf} show the test set loss for the methods designed for sub-sampled, noisy training data. Robust SSDU performed within 0.012 of the fully-supervised benchmark, despite only having access to noisy, sub-sampled training data.  Noise2Recon-SS performs well in some cases, particularly at $R_\Omega = 4$, but is consistently outperformed by both variants of Robust SSDU.  Fig. \ref{fig:examples_ss} shows example reconstructions, demonstrating similar performance to the fully-supervised benchmark qualitatively. Fig. \ref{fig:path} compares Standard SSDU and Robust SSDU using clinical expert bounding boxes from fastMRI+ \cite{zhao2022fastmri+}, which shows that the proposed method has substantially enhanced pathology visualization.  For 2D Bernoulli sampling we found that a lower $R_\Lambda$ and $\alpha$ achieved better performance in practice: we used $R_\Lambda = 1.5$ and $\alpha = 0.5$. Fig. \ref{fig:bern} compares Standard SSDU and Robust SSDU for 2D Bernoulli sampled data at $R_\Omega = 4$ and $\sigma_n = 0.04$, showing that the denoising effect is not specific to column-wise sampling.



%
%

\section{Discussion and conclusions \label{sec:disc}}

Fig. \ref{fig:arch} shows that the proposed Denoising VarNet consistently outperforms the Standard VarNet architecture.  We understand this to be a consequence of the difference between the distributions of errors due to sub-sampling or measurement noise: the Standard VarNet removes both contributions to the error in a single U-net per cascade, while the Denoising VarNet simplifies the task by decomposing the contributions to the error,  so that each of the two U-nets per cascade are specialized for the two distinct error distributions. 

 The improvement in robustness for the weighted versions, shown in Fig. \ref{fig:alpha_line}, is especially prominent for small $\alpha$. For instance, at $\alpha = 0.05$, the unweighted variant of Noisier2Full is 0.50 from the benchmark, while the weighted variant is only 0.04 away. For large $\alpha$ the $\alpha$-based weighting is closer to 1, so weighted Noisier2Full tends to the unweighted method and the difference in performance is small. For instance, when $\alpha = 1.75$, the $\alpha$-based weighting is $1.33$, so has a relatively marginal effect. Although the performance of the methods are reasonably similar for tuned $\alpha$, we recommend using the weighted version in practice due to its improved robustness to $\alpha$. We emphasize that $\alpha$ tuning is only possible here because the noise and sub-sampling are simulated retrospectively; if the data were prospectively noisy and sub-sampled it would not possible to evaluate the fidelity of the estimate and the ground truth. Robustness to hyperparameters such as $\alpha$ is therefore of great importance for the method's usefulness in practice.   

The examples in figures \ref{fig:examples_full}, \ref{fig:examples_ss} and \ref{fig:path} show that proposed methods are qualitatively very similar to the fully-supervised benchmark, and substantially improve over methods without denoising, whose reconstructions are visibly corrupted with measurement noise. The examples exhibit some loss of detail and blurring at tissue boundaries, especially at $R_\Omega = 8$. However, the extent of detail loss is similar in the benchmark, indicating that the loss of detail is not a limitation of the proposed methods. Rather, the qualitative performance is limited by the other factors such as the architecture, dataset and choice of loss function. This can also be explained in part by noting that the high-frequency regions of k-space, which provide fine details, typically have smaller signal so are particularly challenging to recover in the presence of significant measurement noise. 

Table \ref{tab:perf} shows that the NMSE of the noisy, sub-sampled input to the network is \textit{lower} for the higher acceleration factor. This counter-intuitive incidental finding can be understood by noting that the spectral density is typically highly concentrated towards the center, so much of k-space has a small magnitude. Therefore, even for moderate noise levels, zero may be closer to the ground truth than the noisy data, so masking out such regions may improve the NMSE. This is also reflected in the NMSE scores of the reconstructed images. However, we note that this effect is not generally reflected in the qualitative performance on the methods, which we found more frequently exhibit oversmoothing and artifacts for higher acceleration. We believe this to be because the masked data is biased, so it is more difficult to achieve a high quality qualitative performance in practice.

The pseudo-denoising effect described in Section \ref{sec:theory_prop} is visible in Fig. \ref{fig:examples_full}, showing less noise in Supervised w/o denoising than Noisy.  
Table \ref{tab:perf} shows that Standard SSDU performs very similarly to Supervised w/o denoising quantitatively, and exhibits a similar pseudo-denoising effect in Fig. \ref{fig:examples_ss}. 
\begin{figure}[t]
\centering
	\includegraphics[width=0.48\textwidth]{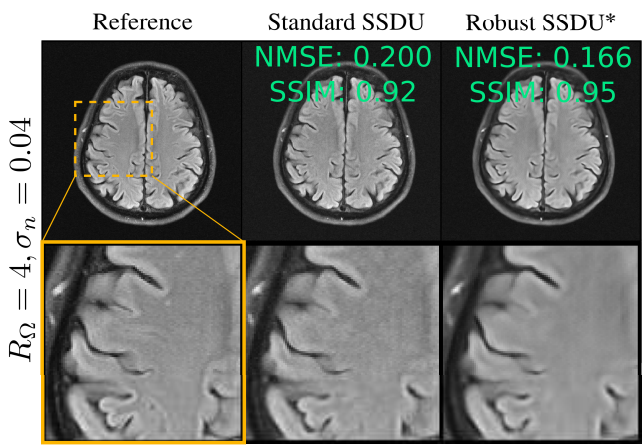}
	\caption{ Example reconstruction for 2D Bernoulli sampling. For Standard SSDU the test set NMSE and SSIM was 0.383 and 0.72 respectively, and for Robust SSDU the test set NMSE and SSIM was  0.316 and 0.75 respectively. \label{fig:bern}}
\end{figure}

\begin{figure*}[!t]
\centering
	\includegraphics[width=\textwidth]{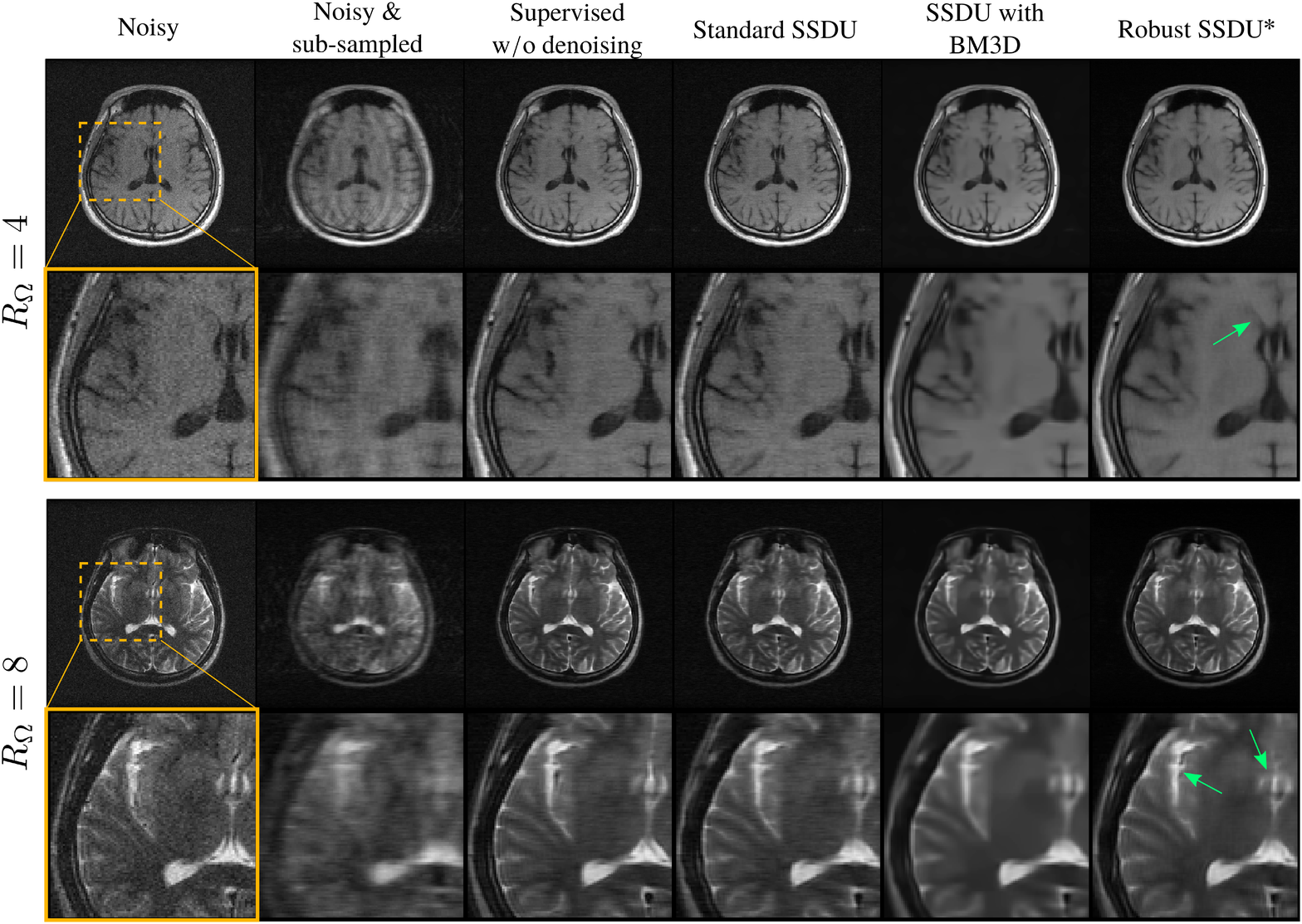}
	\caption{ Qualitative performance of the proposed method on the prospectively noisy
low-field dataset M4Raw. While SSDU with BM3D and Robust SSDU both demonstrate a
denoising effect, Robust SSDU exhibits improved contrast and visibly sharper boundaries,
highlighted by the colored arrows.  \label{fig:examples_ss_m4raw}}
\end{figure*}
 
Although Noise2Recon-SS improves over Standard SSDU, there is a substantial difference between its performance and the proposed Robust SSDU both qualitatively and quantitatively. In \cite{desai2023noise2recon},  Noise2Recon-SS was not compared with a fully-supervised benchmark; it was only shown to have improved performance compared to Standard SSDU, consistent with the results  here. The experimental evaluation in  \cite{desai2023noise2recon} focused on robustness to Out of Distribution (OOD) shifts, where the training and inference measurement noise variances not necessarily the same. Another difference is that Noise2Recon-SS's simulated noise in \cite{desai2023noise2recon} had standard deviation randomly selected from a fixed range, while the experiments here fixed the simulated noise standard deviation so that it could be properly compared with the proposed methods.  

Robust SSDU requires only a few additional cheap computational steps compared to standard training: the addition or multiplication of the further noise and sub-sampling mask respectively, and the $\alpha$-based correction at inference. Accordingly, the compute time and memory requirements of the proposed methods was found to be very similar to Supervised w/o denoising or Standard SSDU.  In contrast, Noise2Recon-SS uses both $M_{\Lambda_t} y_t$ and $y_t + M_{\Omega_t} \nt_t$ as the network inputs at training, so requires twice as many forward passes to train the network compared to Robust SSDU.  Accordingly, we found that Noise2Recon-SS required approximately twice as much memory and took around two times longer per epoch as the proposed methods. 

In general, Supervised with BM3D and SSDU with BM3D both performed well qualitatively. We also found that in many cases these methods had an mean SSIM that exceeded even the fully-supervised benchmark: see Appendix \ref{app:tab} for a detailed discussion. However, for some images, such as those shown with the red arrows of figures 5 and 6, these methods generated potentially clinically misleading artifacts. We believe this to be a consequence of the mismatch between its Gaussian noise model and the actual error of the RSS estimate, which can lead to unreliable noise removal, especially at high $R_\Omega$. We also found that SSDU with BM3D often led to more oversmoothing and less crisp tissue boundaries than Robust SSDU, which is particularly prominent in the M4Raw examples of Fig. \ref{fig:examples_ss_m4raw}. Another disadvantage is the computational expense of the BM3D algorithm:  we found that the reconstruction time of SSDU with BM3D was around 100 times longer per slice than Robust SSDU at inference.

Another existing method designed for noisy, sub-sampled training data is Robust Equivariant Imaging (REI) method \cite{chen2022robust, chen2021equivariant}.  We did not compare with REI as it was  designed for reconstruction tasks with a fixed sampling pattern: the $\Omega_t$ was the same for all $t$. This sampling set assumption  is central to their  use of equivariance, and contrasts with the methods proposed here, which assume that the sampling mask is an instance of a random variable that satisfies $p_j > 0$ everywhere. However, REI's suggestion to use Stein's Unbiased Risk Estimate   (SURE) \cite{Stein1981}  to remove measurement noise would be feasible in combination with SSDU and warrants further investigation in future work. 

The theoretical work presented in this paper only applies to the case of $\ell_2$ minimization, which can lead to blurry reconstructions. However, it has been established that Standard SSDU can be applied with other losses such as an entry-wise  mixed $\ell_1$-$\ell_2$ loss in k-space \cite{yaman2020self}. We have found that Robust SSDU with an $\ell_2$ loss on $\Lambda \cap \Omega$ and mixed $\ell_1$-$\ell_2$ loss on $\Omega \setminus \Lambda$ also performs competitively with a suitable benchmark in practice (results not shown for brevity). Future work includes establishing whether Robust SSDU can be modified to be applicable to other loss functions, including potentially losses on the RSS image. 


The methods presented here also assume that the distribution of $M_\Omega$ is fixed; a modification of the method for dealing with a range of sub-sampling patterns and acceleration factors is a potential avenue for future work. It would also be desirable to develop an approach that automatically tunes $\alpha$ and the distribution of $\Mt$, whose optimal values are specific to the noise model, $M_\Omega$ distribution and dataset. 

Additive Noisier2Noise was designed for Gaussian noise, and is not expected to perform well for other measurement noise distributions. Future work includes extending the framework to other  distributions  and sources of error such as more complex system noise or physiological motion, which has a more complex distribution that may itself be learned \cite{10404226,spieker2023deep}

\section{Appendices}

\subsection{Proof of SSDU variant on $\Mo$} \label{app:mo}

 
This appendix proves Claim \ref{clm:robust_ssdu} using a similar approach to Appendix B of \cite{millard2022theoretical}. Minimization according to \eqref{eqn:LSSDU} yields a network that satisfies
\begin{equation}
\Eds[ \Mo (f_{\theta^*} (\Yt) - Y) | \Yt ] = 0 \label{eqn:meq0}
\end{equation}
We split the conditional expectation into two cases: $ \Yt_j \neq 0$ and $ \Yt_j = 0$. Throughout this paper, $m_j$ and $\mt_j$ refer to the $j$th diagonal of $\Mo$ and $\Mt$ respectively.

\textit{Case 1} ($\Eds[ m_j (f_{\theta^*} (\Yt)_j - Y_j) | \Yt_j  \neq 0]$): When $\Yt_j \neq 0$, the measurement model implies that $m_j=1$  and $Y_j = Y_{0,j} + N_j$. Therefore
\begin{multline}
\Eds[ m_j (f_{\theta^*} (\Yt)_j - Y_j) | \Yt_j  \neq 0] \\ = \Eds[ f_{\theta^*} (\Yt)_j - Y_{0,j} - N_j | \Yt_j  \neq 0] \label{eqn:case1}
\end{multline}

\textit{Case 2} ($\Eds[ m_j (f_{\theta^*} (\Yt)_j - Y_j) | \Yt_j  = 0]$):
We can use the result derived from equation (27) to (29)  in \cite{millard2022theoretical}, with $Y_{0,j}$ replaced by $Y_{0,j} + N_j$:
\begin{multline}
\Eds[ m_j (f_{\theta^*} (\Yt)_j - Y_j) | \Yt_j  = 0] \\ = \Eds[ f_{\theta^*} (\Yt)_j - Y_{0,j} - N_j | \Yt_j  = 0] \cdot (1-k_j) \label{eqn:case2}
\end{multline}
where 
\begin{equation}
 k_j = \Pds [Y_j = 0 |\Yt_j = 0 ] = \frac{1 - p_j}{1 - \pt_j p_j}. \label{eqn:k_def}
\end{equation}

\textit{Combining Cases 1 and 2}: Consider the candidate
\begin{align}
 & \Eds[ m_j (f_{\theta^*} (\Yt)_j - Y_j) | \Yt_j ] \nonumber  
\\ &= \{1-k_j(1 - \mt_j m_j)\} \Eds[ f_{\theta^*} (\Yt)_j - Y_{0,j} - N_j| \Yt_j]. \label{eqn:Em}
\end{align}
To verify that this expression is correct, we can check that it is consistent with Cases 1 and 2. For Case 1: if $\Yt_j \neq 0$, $\mt_j m_j = 1$ and the term in curly brackets is $1$, so \eqref{eqn:Em} is consistent with \eqref{eqn:case1}. For Case 2: if $\Yt_j = 0$, $\mt_j m_j = 0$ and  the term in curly brackets is $1 - k_j$, so \eqref{eqn:Em} is consistent with \eqref{eqn:case2} as required. By \eqref{eqn:meq0},
$$ \{1-k_j(1 - \mt_j m_j)\} \Eds[ f_{\theta^*} (\Yt)_j - Y_{0,j} - N_j| \Yt_j]  = 0 $$
The term in the curly brackets is non-zero for all $j$ if $1-k_j$ is non-zero for $j \notin \Omega \cap \Lambda$, which true when \eqref{eqn:mask_req1} and \eqref{eqn:mask_req2} hold, where we note that the special case $\pt_j = p_j = 1$ is also allowed since $\mt_j m_j = 1$ always. 
Given this assumption, dividing by the term in the curly brackets:
\begin{equation}
\Eds[ f_{\theta^*} (\Yt)_j - Y_{0,j} - N_j| \Yt_j] = 0.
\end{equation}
Vectorizing gives the required result. \hfill $\blacksquare$

\subsection{Proof of weighted Noisier2Full}
\label{app:Nr2N}
To compute the unknown
$$
\dt \Eds \left[ \left\|\hat{Y}_{Nr2F} - Y_0 \right \|^2_2  | \Yt \right]
$$
in terms of the known $Y_0 + N$, we compute the contributions to the loss in $\Omega$ and $\Omega^c$ separately, shown in lemmas \ref{lem:Nr2N} and \ref{lem:Nr2N_2} respectively.

\begin{lemma} \label{lem:Nr2N} Consider the random variables $Y = \Mo(Y_0 + N)$ and $\Yt = Y + \Mo \Nt$, where $N$ and $\Nt$ are zero-mean Gaussian distributed with variances $\sigma_n^2$ and $\alpha^2 \sigma_n^2$ respectively. For an arbitrary function $f_{\theta}$,
\begin{multline}
	 \dt \Eds \left[ \left\| \Mo \left(\Yhn - Y_0\right)  \right \|^2_2  | \Yt \right] \\ = \dt \Eds \left[ \left\|\A \Mo (f_\theta ( \Yt) - Y) \right \|^2_2 | \Yt  \right ]. \label{eqn:add_n2n_clm}
\end{multline} 
\end{lemma}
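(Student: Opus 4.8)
The plan is to show that, after projecting onto $\Omega$, the vector $\Yhn - Y_0$ differs from $\A\,\Mo(f_\theta(\Yt) - Y)$ only by a term that depends neither on $\theta$ nor (in a suitable sense) on $\Yt$, so that this term contributes nothing once we take $\dt$ of the conditional expectation. First I would use $\Mo\Moc = 0$ to discard the $\Moc f_\theta(\Yt)$ summand of $\Yhn$, giving $\Mo(\Yhn - Y_0) = \Mo\!\left(\frac{(1+\alpha^2)f_\theta(\Yt) - \Yt}{\alpha^2} - Y_0\right)$. Working entrywise on $\Omega$, where $m_j = 1$, the measurement model gives $Y_j = Y_{0,j} + N_j$ and $\Yt_j = Y_j + \Nt_j$; substituting $Y_0 = Y - N$ and $\Yt = Y + \Nt$ on these indices and collecting terms yields
$$
\Mo(\Yhn - Y_0) \;=\; \A\,\Mo\bigl(f_\theta(\Yt) - Y\bigr) \;+\; \tfrac{1}{\alpha^2}\,\Mo\bigl(\alpha^2 N - \Nt\bigr),
$$
and off $\Omega$ both sides vanish, so the identity is exact. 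Write $a_\theta$ and $b$ for the two summands.

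Next I would expand $\|a_\theta + b\|_2^2 = \|a_\theta\|_2^2 + 2\,\mathrm{Re}\langle a_\theta, b\rangle + \|b\|_2^2$, take $\Eds[\,\cdot\mid\Yt]$, and apply $\dt$. The term $\dt\,\Eds[\|b\|_2^2\mid\Yt] = 0$ because $b$ is independent of $\theta$, and $\Eds[\|a_\theta\|_2^2\mid\Yt]$ is precisely the right-hand side of the lemma, so it remains only to kill the cross term. Since $Y$ carries no $\theta$-dependence, $\dt\,\mathrm{Re}\langle a_\theta, b\rangle$ equals (a constant times) the Jacobian $\dt f_\theta(\Yt)$ contracted against $\Mo(\alpha^2 N - \Nt)$; as $\dt f_\theta(\Yt)$ is $\Yt$-measurable it pulls out of the conditional expectation, leaving a factor $\Mo\,\Eds[\alpha^2 N - \Nt \mid \Yt]$.

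Finally I would show this conditional mean is zero. The cleanest route is to note $\Yt = \Mo Y_0 + \Mo(N + \Nt)$, so $\Yt$ is a deterministic function of $Y_0$, $\Mo$, and $N + \Nt$. The pair $(N+\Nt,\ \alpha^2 N - \Nt)$ is jointly complex Gaussian with cross-covariance $\Eds[(N+\Nt)(\alpha^2 N - \Nt)^H] = \alpha^2\sigma_n^2\eye - \alpha^2\sigma_n^2\eye = 0$, hence the two are independent; combined with independence of the noise from $Y_0$ and from $\Mo$, this gives $\Eds[\alpha^2 N - \Nt\mid\Yt] = \Eds[\alpha^2 N - \Nt] = 0$. (Equivalently, conditioning entrywise on $S_j = N_j + \Nt_j$ and using the standard Gaussian posterior means $\Eds[N_j\mid S_j] = S_j/(1+\alpha^2)$ and $\Eds[\Nt_j\mid S_j] = \alpha^2 S_j/(1+\alpha^2)$ gives $\alpha^2\Eds[N_j\mid S_j] - \Eds[\Nt_j\mid S_j] = 0$.) Putting the pieces together proves \eqref{eqn:add_n2n_clm}. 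I expect the main obstacle to be the bookkeeping around the complex-valued gradient and conditional expectation — specifically, justifying that $\dt f_\theta(\Yt)$ factors out of $\Eds[\,\cdot\mid\Yt]$ and that the independence argument survives the $\Mo$ projection — rather than any substantive inequality.
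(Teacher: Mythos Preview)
Your proposal is correct and follows essentially the same route as the paper: substitute $\Mo Y_0 = \Mo(Y-N)$ and $\Mo\Yt = \Mo(Y+\Nt)$ to write $\Mo(\Yhn-Y_0)$ as $\A\Mo(f_\theta(\Yt)-Y)$ plus a $\theta$-free noise term, expand the squared norm, and kill the cross term by showing $\Eds[\Mo(\alpha^2 N-\Nt)\mid\Yt]=0$. The only difference is that the paper establishes this last identity by directly computing the Gaussian posterior means $\Eds[N_j\mid\Yt_j]$ and $\Eds[\Nt_j\mid\Yt_j]$ (your parenthetical alternative), whereas your primary argument via the zero cross-covariance of $(N+\Nt,\ \alpha^2 N-\Nt)$ is a cleaner, more structural way to reach the same conclusion.
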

\begin{proof}
Using $\Mo \Yt =\Mo ( Y +  \Nt )$ and $\Mo Y_0 =\Mo ( Y -  N )$, the left-hand-side of \eqref{eqn:add_n2n_clm} is
\begin{align}
& \dt \Eds \left[ \left \|\Mo \left( \frac{(1+\alpha^2)f_\theta ( \Yt) - \Yt}{\alpha^2} - Y_0 \right) \right \|^2_2 | \Yt \right] \nonumber \\ 
&=\dt \Eds \left[ \left \|\Mo \left( \frac{(1+\alpha^2)f_\theta ( \Yt) - Y - \Nt}{\alpha^2} - Y + N \right) \right \|^2_2 | \Yt \right]  \nonumber  \\ 
&=\dt \Eds \left[ \left \|\Mo \left( \A (\ft - Y) + N - \frac{\Nt}{\alpha^2} \right )  \right \|^2_2 | \Yt \right]   \nonumber  \\
&= \dt \Eds \left[ \left\|\A \Mo (f_\theta ( \Yt) - Y) \right \|^2_2 | \Yt  \right ]  \nonumber \\ & \hspace{0.8cm} + \A \dt \Eds \left[ 2 \ft^H \Mo \left  ( N - \frac{\Nt}{\alpha^2} \right)  | \Yt  \right ] \label{eqn:n2n_proof_mid}
\end{align}
where all the terms in the expansion of the $\ell_2$ norm in the last step that are not dependent on $\theta$ have been zeroed by $\dt$. Now we show that the second term on the right-hand-side of \eqref{eqn:n2n_proof_mid} is zero. Lemma 3.1 from \cite{moran2020noisier2noise}  shows that
\begin{equation}
\Eds [ \Mo \Nt | \Yt ] = \alpha^2 \Eds [\Mo N | \Yt ], \label{eqn:nr2n_cond_exp}
\end{equation}
where $\Mo$ is included as the result only applies to sampled terms. We note that the right hand side of \eqref{eqn:nr2n_cond_exp} scales according to the \textit{variance} of the noise rather than the perhaps more intuitive standard deviation. Following \cite{moran2020noisier2noise}, \eqref{eqn:nr2n_cond_exp} can be proven by computing the probability $\Pds [N_j = n | \Yt_j, j \in \Omega]$: 
\begin{multline*}
\Pds [N_j = n | \Yt_j, j \in \Omega] = \Pds[N_j=n] \Pds[\Nt_j=\Yt_j - Y_{0,j} - n_j] \\
\propto \exp \left(- \frac{n^2}{2\sigma^2} \right) \exp \left(  - \frac{(\Yt_j - Y_{0,j} - n)^2}{2\alpha^2 \sigma^2} \right).
\end{multline*}
After some algebraic manipulation not shown here for brevity, this distribution can be shown to have mean $(\Yt_j - Y_{0,j})/(1 + \alpha^2)$. A similar computation for $\Nt_j $ yields a mean of $\alpha^2 (\Yt_j - Y_{0,j})/(1 + \alpha^2)$, giving the $j$th entry of the relationship stated in \eqref{eqn:nr2n_cond_exp} conditional on $j \in \Omega$. Since for the alternative $j \notin \Omega$ both sides are trivially zero,  \eqref{eqn:nr2n_cond_exp} is correct for all indices.

Applying \eqref{eqn:nr2n_cond_exp} to the right hand side of \eqref{eqn:n2n_proof_mid} gives
\begin{multline}
 \Eds \left[ \ft^H \Mo \left( N - \frac{\Nt}{\alpha^2} \right)  | \Yt  \right ] \\  =\ft^H \Eds \left[ \Mo \left( N - \frac{\Nt}{\alpha^2} \right)  | \Yt  \right ] = 0
\end{multline}
where the conditional dependence on $\Yt$ allows the removal of $\ft$ from the expectation. Therefore the right-hand-side of  \eqref{eqn:n2n_proof_mid} equals the right-hand-side of   \eqref{eqn:add_n2n_clm} as required.
\end{proof}

\begin{lemma} \label{lem:Nr2N_2} Consider the random variables $Y$ and $\Yt$ as defined in Lemma \ref{lem:Nr2N}. For an arbitrary function $f_{\theta}$,
\begin{multline}
	\dt \Eds \left[ \left\| \Moc (\Yhn - Y_0 ) | \Yt \right \|^2_2   \right] \\ = \dt \Eds \left[ \left\|\Moc (f_\theta ( \Yt) - Y_0 - N) \right \|^2_2 | \Yt  \right ]. \label{eqn:add_n2n_clm_moc}
\end{multline}
\end{lemma}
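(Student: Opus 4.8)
The plan is to exploit the fact that $\Moc$ annihilates the $\Mo$-supported term in $\Yhn$, which collapses the claim to a short algebraic identity on the $\Omega^c$-supported part of the loss, and then to discharge the remaining cross term using the same ``noise on unobserved indices has zero conditional mean'' argument already used in \eqref{eqn:motiv_notin}.

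First I would note that $\Moc \Mo = 0$ and $\Moc \Moc = \Moc$, so $\Moc \Yhn = \Moc \ft$ and hence $\Moc(\Yhn - Y_0) = \Moc(\ft - Y_0)$. The left-hand side of \eqref{eqn:add_n2n_clm_moc} therefore equals $\dt \Eds[\|\Moc(\ft - Y_0)\|_2^2 | \Yt]$.

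Next I would expand the right-hand side via
\begin{equation*}
\|\Moc(\ft - Y_0 - N)\|_2^2 = \|\Moc(\ft - Y_0)\|_2^2 - 2(\ft - Y_0)^H \Moc N + \|\Moc N\|_2^2,
\end{equation*}
(real parts understood, exactly as in the proof of Lemma \ref{lem:Nr2N}) and apply $\dt \Eds[\cdot | \Yt]$ term by term. The first term reproduces the left-hand side; the last term does not depend on $\theta$ and is killed by $\dt$. For the cross term I would write $\Eds[(\ft - Y_0)^H \Moc N | \Yt] = \ft^H \Eds[\Moc N | \Yt] - \Eds[Y_0^H \Moc N | \Yt]$, using that $\ft$ is a deterministic function of $\Yt$; the second piece carries no $\theta$-dependence and so vanishes under $\dt$, and the first piece vanishes once $\Eds[\Moc N | \Yt] = 0$ is established.

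The only mildly delicate step is that last identity, which I would prove entry-wise as in \eqref{eqn:motiv_notin}: the $j$th diagonal entry of $\Moc$ is one precisely when $j \notin \Omega$, and for such $j$ the noise component $N_j$ does not appear in $\Yt = \Mo(Y_0 + N) + \Mo \Nt$, so $N_j$ is independent of $\Yt$ (conditioning additionally on $\Mo$ and invoking the tower rule if one wants to be fully careful about whether $\Omega$ is recoverable from $\Yt$), giving $\Eds[(\Moc N)_j | \Yt] = \Eds[N_j] = 0$; for $j \in \Omega$ the entry $(\Moc N)_j$ is identically zero. Collecting the surviving terms gives right-hand side $=$ left-hand side. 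I expect no genuine obstacle: the argument is essentially the $\Moc \Mo = 0$ cancellation plus a fact already established in the body of the paper, and the only point requiring attention is taking real parts correctly in the complex inner products.
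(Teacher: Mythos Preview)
Your proposal is correct and follows essentially the same approach as the paper: reduce $\Moc\Yhn$ to $\Moc\ft$ via $\Moc\Mo=0$, then kill the cross term $\ft^H\Moc N$ using $\Eds[\Moc N\,|\,\Yt]=0$ (by independence of $N_j$ from $\Yt$ for $j\notin\Omega$) and drop the $\theta$-independent pieces under $\dt$. The only cosmetic difference is that the paper inserts $-N+N$ into the left-hand side and expands, whereas you expand the right-hand side; the surviving terms and the key independence step are identical.
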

\begin{proof}
Using $\Moc \Mo = 0$, $\Moc \Moc = \Moc$ and the definition of $\Yhn$ in  \eqref{eqn:n2f_yhat}, we have $\Moc \Yhn = \Moc \ft$. Therefore the left-hand-side of \eqref{eqn:add_n2n_clm_moc} is 
\begin{align}
&\dt \Eds \left[ \left\|\Moc (f_\theta ( \Yt) - Y_0) \right \|^2_2 | \Yt  \right ] \nonumber \\
&= \dt \Eds \left[ \left\|\Moc (f_\theta ( \Yt) - Y_0 - N + N) \right \|^2_2 | \Yt  \right ] \nonumber \\
&= \dt \Eds \left[ \left\|\Moc (f_\theta ( \Yt) - Y_0 - N)\right \|^2_2 + 2 \ft ^H \Moc N  | \Yt  \right ]
 \label{eqn:a}
\end{align}
where again all the terms not dependent on $\theta$ have been zeroed by $\dt$. The second term is 
\begin{equation}
\Eds \left[\ft ^H \Moc N  | \Yt  \right ] = \ft ^H \Eds \left[\Moc N  | \Yt  \right ] = 0 
\end{equation}
where, as in \eqref{eqn:motiv_notin}, we have used the independence of $N$ from $\Yt$ when $j \notin \Omega$. Therefore \eqref{eqn:a} equals the right-hand-side of \eqref{eqn:add_n2n_clm_moc} as required.
\end{proof}

To find the $\ell_2$ error of $\Yhn$, we use $\Mo + \Moc = \eye$ and sum the results from lemmas \ref{lem:Nr2N} and \ref{lem:Nr2N_2}:
\begin{align*}
	 &\dt \Eds \left[ \left\|\hat{Y}_{Nr2F} - Y_0 \right \|^2_2  | \Yt \right] \\ &
	 = \dt \Eds \left[ \left\|(\Mo + \Moc)(\hat{Y}_{Nr2F} - Y_0 ) \right \|^2_2  | \Yt \right] \\ &	
	 =\dt  \Eds \left[ \left\|\left(\A \Mo + \Moc \right)(\ft - Y_0 - N ) \right \|^2_2  | \Yt \right]
\end{align*}
as required. 
\subsection{Proof of Robust SSDU weighting}
\label{app:rssdu}

Analogous to Appendix \ref{app:Nr2N}, to compute the unknown
$$
\dt \Eds \left[ \left\|\Yhr - Y_0 \right \|^2_2  | \Yt \right]
$$
in terms of the known sub-sampled, noisy $Y$, we compute the contributions to the loss from $\Lambda \cap \Omega $ and $(\Lambda \cap \Omega)^c$ separately. For the contribution from $\Lambda \cap \Omega $, an identical approach to the proof in Lemma \ref{lem:Nr2N} can be used with $\Omega$ replaced by $\Lambda \cap \Omega $, so that 
\begin{multline}
	 \dt \Eds \left[ \left\| \Mlo \left(\Yhr - Y_0\right)  \right \|^2_2  | \Yt \right] \\ = \dt \Eds \left[ \left\|\A \Mlo (f_\theta ( \Yt) - Y) \right \|^2_2 | \Yt  \right ].  \label{eqn:mloc_r}
\end{multline}
The following lemma shows how the remaining loss, which is computed on $\Omega \setminus \Lambda$, can be used to estimate the target ground truth loss, which is over $(\Lambda \cap \Omega)^c$.

\begin{lemma} \label{lem:}
Consider the random variables $Y = \Mo (Y_0 + N)$ and $\Yt = \Mlo (Y +  \Nt)$, where $N$ and $\Nt$ are zero-mean Gaussian distributed with variances $\sigma_n^2$ and $\alpha^2 \sigma_n^2$ respectively. For an arbitrary function $f_{\theta}$,
\begin{multline}
\dt \Eds\left[  \left\|\Mloc (\Yhr - Y_0) \right \|^2_2 | \Yt  \right ] \\ = 
 \dt \Eds\left[  \left \|  \mathcal{P}^{1/2} \Moml (f_\theta ( \Yt) - Y ) \right \|^2_2 | \Yt \right] , \label{eqn:lem3_res}
\end{multline}
where $\mathcal{P}$ is defined in \eqref{eqn:p_def}.
\end{lemma}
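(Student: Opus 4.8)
The plan is to complete Appendix \ref{app:rssdu} by mirroring the structure of Appendix \ref{app:mo} and Lemma \ref{lem:Nr2N}: first reduce the left-hand side using the definition of $\Yhr$, then argue index by index, splitting on whether $\Yt_j = 0$. Since $\Mloc \Yhr = \Mloc f_\theta(\Yt)$ directly from the definition of $\Yhr$, the left-hand side equals $\dt \Eds[ \| \Mloc (f_\theta(\Yt) - Y_0) \|_2^2 \mid \Yt ]$, so both sides are now $\dt$ of squared $\ell_2$ errors of $f_\theta(\Yt)$ and it suffices to match them entrywise. Writing $m_j,\mt_j$ for the $j$th diagonals of $\Mo,\Mt$ as in Appendix \ref{app:mo}, we have $(\Mloc)_{jj} = 1-\mt_j m_j$ and $(\Moml)_{jj} = m_j(1-\mt_j)$, and since conditioning on $\Yt$ fixes the vector $f_\theta(\Yt)$, the remaining expectations act only on $(Y_0,N,\Nt,\Mo,\Mt)$.

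When $\Yt_j \neq 0$ the measurement model forces $j \in \Lambda\cap\Omega$, so both $(\Mloc)_{jj}$ and $(\Moml)_{jj}$ vanish and there is nothing to prove. In the case $\Yt_j = 0$, up to the probability-zero event $Y_{0,j}+N_j+\Nt_j=0$ one has $j \notin \Lambda\cap\Omega$, so $(\Mloc)_{jj}=1$ almost surely and the left-hand index term is $\dt \Eds[ |f_\theta(\Yt)_j - Y_{0,j}|^2 \mid \Yt]$. For the right-hand index term I would write $\Eds[(\Moml)_{jj}|f_\theta(\Yt)_j - Y_j|^2 \mid \Yt] = \Pds[j\in\Omega\setminus\Lambda\mid\Yt]\,\Eds[|f_\theta(\Yt)_j - Y_{0,j} - N_j|^2 \mid \Yt, j\in\Omega\setminus\Lambda]$, using that $(\Moml)_{jj}$ is the indicator of $j\in\Omega\setminus\Lambda$ and that on this event $Y_j = Y_{0,j}+N_j$. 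Two facts then close it. First, exactly as in \eqref{eqn:k_def}, independence of $\Mo$ and $\Mt$ gives $\Pds[j\in\Omega\setminus\Lambda\mid\Yt] = \Pds[j\in\Omega\setminus\Lambda\mid j\notin\Lambda\cap\Omega] = p_j(1-\pt_j)/(1-\pt_j p_j) = 1-k_j$, while $\mathcal{P} = \Eds[\Moml]^{-1}\Eds[\Mloc]$ gives $\mathcal{P}_{jj} = (1-\pt_j p_j)/(p_j(1-\pt_j))$, so $\mathcal{P}_{jj}\,\Pds[j\in\Omega\setminus\Lambda\mid\Yt] = 1$ — this is precisely why the weighting $\mathcal{P}^{1/2}$ is chosen as it is. Second, by the reasoning used for equations (27)--(29) of \cite{millard2022theoretical}, namely that the random masks are independent of the signal and noise, conditioning further on the mask event $\{j\in\Omega\setminus\Lambda\}$ does not change the law of $(Y_{0,j},N_j)$ given $\Yt$, so that event may be dropped from the conditioning. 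Hence the right-hand index term is $\dt \Eds[|f_\theta(\Yt)_j - Y_{0,j} - N_j|^2 \mid \Yt]$. Since $j\notin\Lambda\cap\Omega$ means $N_j$ does not enter $\Yt$ and is independent of $Y_{0,j}$, expanding the square shows this differs from $\dt \Eds[|f_\theta(\Yt)_j - Y_{0,j}|^2\mid\Yt]$ only by $\dt \Eds[|N_j|^2] = 0$; so the two index terms agree. Summing over $j$ and vectorizing, and combining with \eqref{eqn:mloc_r}, gives the stated identity.

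The step I expect to be the main obstacle, and the one I would import carefully from \cite{millard2022theoretical} rather than re-derive, is this conditional-independence manipulation: that given $\Yt$ the posterior law of $(Y_{0,j},N_j)$ is unaffected by additionally conditioning on $\{j\in\Omega\setminus\Lambda\}$. It rests on the masks being independent of the signal and noise and on the entrywise structure of the masks (so that $\Mo,\Mt$ at index $j$ are independent of $\Yt$ at the other coordinates); once this and the identity $\mathcal{P}_{jj}(1-k_j)=1$ are granted, the rest is routine algebra parallel to Lemmas \ref{lem:Nr2N} and \ref{lem:Nr2N_2}.
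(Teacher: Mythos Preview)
Your proposal is correct and follows essentially the same route as the paper: both reduce $\Mloc\Yhr$ to $\Mloc\ft$ and then combine two ingredients---the identity $\Pc_{jj}(1-k_j)=1$ imported from \cite{millard2022theoretical} to pass between $\Mloc$ and $\Moml$, and the independence of $N_j$ from $\Yt$ on $(\Lambda\cap\Omega)^c$ to pass between $Y_0$ and $Y_0+N$. The only difference is organizational: the paper applies these sequentially (first invoking the analogue of Lemma~\ref{lem:Nr2N_2} with $\Omega^c$ replaced by $(\Lambda\cap\Omega)^c$, then the $\Pc$ relationship from \cite{millard2022theoretical}), whereas you fold both into a single entrywise case split on $\Yt_j=0$, but the substance is identical.
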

\begin{proof}
Since $\Mloc \Yhr = \Mloc \ft$, the left-hand-side of \eqref{eqn:lem3_res} is 
\begin{multline}
\dt \Eds\left[  \left\|\Mloc (\ft   - Y_0) \right \|^2_2 | \Yt  \right ] \\ = 
\dt \Eds \left[ \left\|\Mloc (f_\theta ( \Yt) - Y_0 - N) \right \|^2_2 | \Yt  \right ], 
\end{multline}
where Lemma \ref{lem:Nr2N_2} with $\Omega^c$ replaced by $(\Lambda \cap \Omega)^c$ has been used. Using $|\cdot|^2$ to denote the entry-wise magnitude squared, we can write 
\begin{multline}
 \dt \Eds \left[ \left\|\Mloc (f_\theta ( \Yt) - Y_0 - N) \right \|^2_2 | \Yt  \right ]  \\ 
 =  \dt \Eds \left[1_q^T  \Mloc |f_\theta ( \Yt) - Y_{0} - N|^2   | \Yt  \right ], \label{eqn:1q}  
\end{multline} 
where $ 1_q$ is a $q$-dimensional vector of ones. Eqn. (32) from \cite{millard2022theoretical} shows that  the conditional expectation of $\ft - Y$ on $\Mloc$ and $\Moml$ is related by a factor $\Pc$. By repeating that derivation with all instances of $\ft - Y$ trivially replaced with $|\ft - Y_0 - N|^2$, a similar relationship can be derived for the latter, yielding the same $\Pc$ factor: see \cite{millard2022theoretical}. In brief, if the $j$th entry $\Yt_j$ is not zero then the $j$th diagonal of $\Moml$ is zero, so
\begin{align}
	\Eds\left[   | \Moml (f_\theta ( \Yt) - Y_0 - N )  |_j^2 | \Yt_j \neq 0 \right] = 0. \label{eqn:foo}
\end{align}
When the $j$th entry of $\Yt_j$ \textit{is} zero, 
\begin{multline}
	\Eds\left[   | \Moml (f_\theta ( \Yt) - Y_0 - N )  |_j^2 | \Yt_j = 0 \right] \\ = \Eds\left[ \Pc_{jj}^{-1}  | f_\theta ( \Yt) - Y   |_j^2 | \Yt_j = 0 \right]. \label{eqn:bar}
\end{multline}
See (31) of \cite{millard2022theoretical} for a detailed derivation. Combining both cases from  \eqref{eqn:foo} and \eqref{eqn:bar}, 
\begin{multline}
	\Eds\left[   | \Moml (f_\theta ( \Yt) - Y_0 - N )  |_j^2 | \Yt_j\right ] \\ =  \Eds \left[ \Pc_{jj}^{-1}   | \Mloc  (f_\theta ( \Yt) - Y_{0} - N)|^2_j   | \Yt_j  \right ]. 
\end{multline}
Applying this result to \eqref{eqn:1q} by multiplying with $1_q^T$, and bringing the masks outside of the  entry-wise magnitude:
\begin{align*} 
&\dt \Eds \left[ 1_q^T \Mloc |f_\theta ( \Yt) - Y_{0} - N|^2   | \Yt  \right ]  \\
&=  \dt \Eds \left[ 1_q^T \Pc \Moml    |f_\theta ( \Yt) - Y_{0} - N|^2   | \Yt  \right ] \\
&= \dt \Eds\left[  \left \|  \mathcal{P}^{1/2} \Moml (f_\theta ( \Yt) - Y ) \right \|^2_2 | \Yt \right],
\end{align*}
as required.
\end{proof}

To find the $\ell_2$ error of $\Yhr$, we use $\Mlo + \Mloc = \eye$ and sum \eqref{eqn:mloc_r} and \eqref{eqn:lem3_res}:
\begin{align*}
	 &\dt \Eds \left[ \left\|\Yhr - Y_0 \right \|^2_2  | \Yt \right] \\ &
	 = \dt \Eds \left[ \left\|(\Mlo + \Mloc)(\hat{Y}_{Nr2F} - Y_0 ) \right \|^2_2  | \Yt \right] \\ &	
	 = \dt \Eds \left[ \left\|\left(\A \Mlo + \mathcal{P}^{1/2} \Moml \right)(\ft - Y) \right \|^2_2  | \Yt \right] 
\end{align*}
as required. 

\subsection{Table of SSIM on test set \label{app:tab}}

The mean SSIM on the magnitude images are shown in Table \ref{tab:ssim}. The SSIM of the proposed methods is comparable to the fully-supervised benchmark. However, in many cases, the methods that use BM3D outperform even the fully-supervised benchmark, implying that BM3D achieves a better SSIM than the machine learning based approach to denoising used in this paper. We emphasize that the entirely data-driven approaches were not trained to minimize for SSIM, and the SSIM would be expected to substantially improve if it was included in the loss function \cite{zbontar2018fastmri}. 

The methods that use BM3D have a considerably higher standard error, which indicates an substantially higher variation in the quality of the output. We believe that this is a consequence of the mismatch between BM3D's Gaussian noise model and the actual error of the RSS estimate, which leads to higher risk of oversmoothing and artifacts such as those shown in figures \ref{fig:examples_full} and \ref{fig:examples_ss}. 

\begin{table*}[th]
\centering
\fontsize{5.6pt}{8pt}\selectfont
\begin{tabular}{c||c|c|c|c||c|c|c|c}
\multicolumn{1}{c}{} & \multicolumn{4}{c}{\textsc{Acceleration factor} $R_\Omega=4$ } & \multicolumn{4}{c}{\textsc{Acceleration factor} $R_\Omega=8$ } \\ 
      & $\sigma_n=0.02$ & $\sigma_n = 0.04$ & $\sigma_n = 0.06$ & $\sigma_n = 0.08$ & $\sigma_n=0.02$ & $\sigma_n = 0.04$ & $\sigma_n = 0.06$ & $\sigma_n = 0.08$  \\\hhline{=#=|=|=|=#=|=|=|=}
Noisy \& sub-sampled  &  $0.76\pm0.006$ & $0.64\pm0.006$ & $0.50\pm0.006$ & $0.40\pm0.005$ & $0.72\pm0.008$ & $0.67\pm0.007$ & $0.57\pm0.006$ & $0.48\pm0.006$   \\[0pt] \hline
Fully-supervised benchmark & $\mathbf{0.83\pm0.007}$ & $\mathbf{0.75\pm0.006}$ & $\mathbf{0.63\pm0.006}$ & $\mathbf{0.52\pm0.005}$ & $\mathbf{0.75\pm0.008}$ & $\mathbf{0.77\pm0.007}$ & $\mathbf{0.73\pm0.007}$ & $\mathbf{0.67\pm0.006}$\\[0pt] \hhline{=#=|=|=|=#=|=|=|=}
Supervised w/o denoising & ${0.83\pm0.006}$ & ${0.70\pm0.006}$ & ${0.55\pm0.005}$ & ${0.43\pm0.005}$ & ${0.80\pm0.008}$ & ${0.74\pm0.006}$ & ${0.63\pm0.006}$ & ${0.52\pm0.005}$\\[0pt] \hline
Supervised with BM3D& $\mathbf{0.86\pm0.025}$ & $\mathbf{0.75\pm0.042}$ & $\mathbf{0.64\pm0.043}$ & $\mathbf{0.55\pm0.041}$ & $\mathbf{0.85\pm0.014}$ & $\mathbf{0.78\pm0.033}$ & ${0.69\pm0.039}$ & ${0.61\pm0.039}$\\[0pt] \hline
Unweighted Noisier2Full* & ${0.83\pm0.007}$ & $\mathbf{0.75\pm0.006}$ & ${0.63\pm0.006}$ & ${0.52\pm0.005}$ & $0.76\pm0.008$ & ${0.77\pm0.007}$ & $\mathbf{0.73\pm0.007}$ & $\mathbf{0.66\pm0.006}$\\[0pt] \hline
Noisier2Full* & ${0.82\pm0.007}$ & ${0.74\pm0.006}$ & ${0.62\pm0.006}$ & ${0.50\pm0.005}$ & ${0.74\pm0.008}$ & ${0.76\pm0.007}$ & ${0.72\pm0.007}$ & ${0.65\pm0.006}$\\[0pt] \hhline{=#=|=|=|=#=|=|=|=}
Standard SSDU & ${0.83\pm0.004}$ & ${0.69\pm0.004}$ & ${0.55\pm0.003}$ & ${0.43\pm0.003}$ & ${0.79\pm0.005}$ & ${0.74\pm0.004}$ & ${0.63\pm0.004}$ & ${0.52\pm0.003}$\\[0pt] \hline
SSDU with BM3D & $\mathbf{0.86\pm0.025}$ & $\mathbf{0.75\pm0.042}$ & $\mathbf{0.64\pm0.043}$ & $\mathbf{0.56\pm0.041}$ & $\mathbf{0.84\pm0.014}$ & $\mathbf{0.78\pm0.033}$ & ${0.69\pm0.039}$ & ${0.61\pm0.039}$\\[0pt] \hline
Noise2Recon-SS & ${0.83\pm0.006}$ & ${0.71\pm0.006}$ & ${0.56\pm0.005}$ & ${0.47\pm0.005}$ & ${0.79\pm0.008}$ & ${0.73\pm0.006}$ & ${0.66\pm0.006}$ & ${0.56\pm0.005}$\\[0pt] \hline
Unweighted Robust SSDU* & ${0.83\pm0.007}$ & $\mathbf{0.75\pm0.006}$ & ${0.62\pm0.006}$ & ${0.51\pm0.005}$ & ${0.75\pm0.008}$ & ${0.77\pm0.007}$ & $\mathbf{0.72\pm0.006}$ & $\mathbf{0.65\pm0.006}$\\[0pt] \hline
Robust SSDU* & ${0.83\pm0.007}$ & ${0.74\pm0.006}$ & ${0.62\pm0.006}$ & ${0.50\pm0.005}$ & ${0.75\pm0.008}$ & ${0.76\pm0.007}$ & $\mathbf{0.72\pm0.007}$ & $\mathbf{0.65\pm0.006}$\\[0pt] 
\end{tabular}
\caption{The methods' mean test set SSIM on the magnitude images with standard errors.   \label{tab:ssim}}
\end{table*}

\bibliographystyle{ieeetr}
\bibliography{library_manu}

\end{document}